\documentclass[12pt]{article}
\usepackage{amssymb}
\usepackage{amsmath}
\usepackage{amsthm}
\usepackage{calc}
\usepackage{xcolor}
\usepackage{cancel}
\textwidth18.5cm
\topmargin-15mm
\evensidemargin-20mm
\oddsidemargin-10mm
 \textheight23.5cm

\usepackage[affil-it]{authblk}
 

\usepackage{graphicx}

\newtheorem{theorem}{Theorem}
\newtheorem{prop}[theorem]{Proposition}

\newtheorem{remark}[theorem]{Remark}
\newenvironment{rem}{\begin{remark} \rm}{\end{remark}}
\newtheorem{example}[theorem]{Example}







\newcommand{\den}[2]{{{\delta^{(n)}_{#1,#2}}}}

\def\d{{\mathrm d}}


\usepackage{cancel}

\def\parpo#1#2{\{#1,#2\}}

\newcommand{\RR}{{{\mathbb{R}}}}
\newcommand{\brke}[2]{{{\langle #1,#2\rangle}}}

\def\M{{\mathcal M}}
\def\Tr{\operatorname{Tr}}

\begin{document}
\title{Poisson quasi-Nijenhuis manifolds, closed Toda lattices,\\
and generalized recursion relations}
\date{} 

\author{E.\ Chu\~no Vizarreta${}^{1}$, G.\ Falqui${}^{2,5}$, I.\ Mencattini${}^3$, 
M.\ Pedroni${}^{4,5}$}

\affil{
{\small  $^1$Unidade Acad\^emica de Belo Jardim, Universidade Federal Rural de Pernambuco, Brazil}\\
{\small eber.vizarreta@ufrpe.br  
}\\
\medskip
{\small  $^2$Dipartimento di Matematica e Applicazioni, Universit\`a di Milano-Bicocca, Italy
}\\
{\small  gregorio.falqui@unimib.it 
}\\
\medskip
{\small $^3$Instituto de Ci\^encias Matem\'aticas e de Computa\c c\~ao,  Universidade de S\~ao Paulo, Brazil}\\
{\small igorre@icmc.usp.br 
}\\
\medskip
{\small $^4$Dipartimento di Ingegneria Gestionale, dell'Informazione e della Produzione,  Universit\`a di Bergamo, Italy}\\
{\small marco.pedroni@unibg.it 
}\\
\medskip
{\small  $^5$INFN, Sezione di Milano-Bicocca, Piazza della Scienza 3, 20126 Milano, Italy}
}

\maketitle
\abstract{\noindent
We present two involutivity theorems in the context of Poisson quasi-Nijenhuis 
manifolds. 
The second one stems from recursion relations that generalize the so called Lenard-Magri relations on a bi-Hamiltonian manifold. We apply these results to the closed (or periodic) Toda lattices of type 
$A_n^{(1)}$, $C_n^{(1)}$, $A_{2n}^{(2)}$ and, for the ones of type $A^{(1)}_n$, we show how this geometrical setting relates to their bi-Hamiltonian representation and to their recursion relations.}

\medskip\par\noindent
{\bf Keywords:} Integrable systems; Toda lattices; Poisson quasi-Nijenhuis manifolds; bi-Hamiltonian manifolds; Flaschka coordinates.
\medskip\par\noindent
{\bf Mathematics Subject Classification:} 37J35, 53D17, 70H06.  

\baselineskip=0,6cm

\section{Introduction}
Poisson-Nijenhuis (PN) manifolds \cite{MagriMorosiRagnisco85,KM} were introduced to geometrically describe 
the properties of Hamiltonian integrable systems. Such manifolds are endowed with a Poisson tensor $\pi$ 
and with a tensor field $N$ of type $(1,1)$, sometimes called ``recursion" or ``Nijenhuis" operator (see \cite{BKM2022} and references therein),
which is torsionless and compatible (see Section \ref{sec:PN-PqN}) with 
$\pi$. They are important examples of bi-Hamiltonian manifolds, with the traces $H_k$ of the powers 
of $N$ satisfying the Lenard-Magri relations and thus being in involution with respect to both Poisson brackets induced by the Poisson tensors. Such $H_k$ 
can be considered as prototypical examples of geometrically defined Liouville integrable systems.

An interesting generalization of the notion of PN manifold was introduced in \cite{SX}. In that paper, a Poisson quasi-Nijenhuis (PqN) manifold was defined to be a Poisson manifold with a compatible tensor field $
N$ of type $(1,1)$, whose torsion needs not vanish but is controlled by a suitable 3-form $\phi$, as we recall in Section \ref{sec:PN-PqN}. 
This means that, in general, the Lenard-Magri relations do not hold and so the traces ${
H}_k$ of the powers of ${
N}$ are not in involution.
For this reason, no application of PqN manifolds to the theory of 
integrable systems was found until the paper \cite{FMOP2020}, where 
sufficient conditions
entailing that the functions ${
H}_k$ are in involution were found.
In the same paper, these results were applied to interpret the well known integrability of the closed Toda lattice in the PqN framework, showing that its integrals of motion are the traces of the powers of a tensor field 
of type $(1,1)$, which is a deformation (by means of a suitable closed 2-form $\Omega$) 
of the Das-Okubo recursion operator \cite{DO} 
of the open Toda lattice. 
Further general results about this deformation process were presented in \cite{FMP2023,DMP2024}.
In particular, a relevant notion therein introduced and discussed is the possibility of  ``bouncing" between PN and PqN manifolds by deforming a Nijenhuis operator by means of a judiciously chosen closed $2$-form $\Omega$.

In this paper we prove a stronger version of the above mentioned involutivity theorem, Theorem \ref{thm:involution}, concerning PqN manifolds which are not necessarily deformations of PN manifold. Moreover, we present a second involutivity theorem, Theorem \ref{thm:involution2}, exploiting some generalized 
recursion relations involving both the $-\Omega$ deformation and the traces of the powers of the $+\Omega$ deformation of a given PN structure. 
These results are applied to the (closed) Toda lattices associated with the affine Lie algebras of type
$A_n^{(1)}$, $C_n^{(1)}$  and $A_{2n}^{(2)}$. 
In the $A_n^{(1)}$ case, corresponding to the classical periodic Toda lattices, Theorem \ref{thm:involution2} discloses a link between the PqN structure of these lattices 
and their bi-Hamiltonian representation in Flaschka coordinates.

This paper is organized as follows. 
Section \ref{sec:PN-PqN} is devoted to the definitions of PN and PqN manifolds, while in 
Section \ref{igor-connection} we recall how to deform a P(q)N manifold by means of a closed 2-form. 
In Section \ref{sec:involutivity} the two above mentioned involutivity theorems are proved, and they are 
exemplified in the cases of closed Toda lattices of type $A^{(1)}_n$ and $C^{(1)}_2$
in Section \ref{sec:Toda}.
In the final Section \ref{sec:Flaschka} we show that the well known bi-Hamiltonian structure of the 
$A_n^{(1)}$-Toda lattice (in Flaschka coordinates) can be interpreted as a projection of a PqN structure (more precisely, the $-\Omega$ deformation of the Das-Okubo PN structure), and that the recursion relations in the first setting come from the generalized ones in the PqN setting. The Appendices are devoted to the proof of Proposition \ref{prop:F-rel} and to the computational details of the $C_n^{(1)}$- and $A_{2n}^{(2)}$-Toda lattices, for generic $n$.


\section{Poisson quasi-Nijenhuis manifolds}
\label{sec:PN-PqN}

Let $N:T\M\to T\M$ be a $(1,1)$ tensor field on a manifold $\M$. We recall that its {\it Nijenhuis torsion\/} is defined as 
\begin{equation}
\label{tndef1}
T_N(X,Y)=[NX,NY]-N\left([NX,Y]+[X,NY]-N[X,Y]\right),
\end{equation}
and that, given a $p$-form $\alpha$, with $p\ge 1$, one can construct another $p$-form $i_N\alpha$ as \begin{equation}
\label{iNalpha}
i_N\alpha(X_1,\dots,X_p)=\sum_{i=1}^p \alpha(X_1,\dots,NX_i,\dots,X_p).
\end{equation}
If $\pi$ is a Poisson bivector on $\M$ and $\pi^\sharp:T^*\M\to T\M$ is defined by $\langle \beta,\pi^\sharp\alpha\rangle=\pi(\alpha,\beta)$, then 
$\pi$ and $N$ are said to be {\it compatible\/} \cite{MagriMorosiRagnisco85} if 
\begin{equation}
\label{N-P-compatible}
\begin{split}
&N\pi^\sharp=\pi^\sharp N^*,\,\,
\mbox{where $N^*:T^*\M\to T^*\M$ is the transpose of $N$;}\\
&L_{\pi^\sharp\alpha}(N) X-\pi^\sharp
L_{X}(N^*\alpha)+\pi^\sharp
L_{NX}\alpha=0,\,\,\mbox{for all 1-forms $\alpha$ and vector fields $X$.}
\end{split}
\end{equation}
In \cite{SX} a {\it Poisson quasi-Nijenhuis (PqN) manifold\/} was defined as a quadruple $(\M,\pi,N,\phi)$ such that:
\begin{itemize}
\item the Poisson bivector $\pi$ and the $(1,1)$ tensor field $N$  
are compatible;
\item the 3-forms $\phi$ and $i_N\phi$ are closed;
\item $T_N(X,Y)=\pi^\sharp\left(i_{X\wedge Y}\phi\right)$ for all vector fields $X$ and $Y$, where $i_{X\wedge Y}\phi$ is the 1-form defined as $\langle i_{X\wedge Y}\phi,Z\rangle=\phi(X,Y,Z)$.
\end{itemize}

If $\phi=0$, then the torsion of $N$ vanishes and $\M$ becomes a {\it Poisson-Nijenhuis manifold} (see \cite{KM} and references therein). In this case, the bivector field $\pi_N^{\phantom{\sharp}}$ defined by $\pi_N^\sharp=N\pi^\sharp$ is a Poisson tensor compatible with $\pi$, so that $\M$ is a bi-Hamiltonian manifold. 
Moreover, the functions
\begin{equation}
\label{tracce}
H_k=\frac1{2k}\Tr(N^k),\qquad k=1,2,\dots,
\end{equation}
satisfy $\d H_{k+1}=N^* \d H_{k}$, entailing
the so-called {\it Lenard-Magri relations\/} 
\begin{equation}
\label{LM-rel}
\pi^\sharp\, \d H_{k+1}=\pi_N^\sharp\, \d H_{k}
\end{equation}
and therefore their involutivity with respect to both Poisson brackets induced by $\pi$ and $\pi_N$. 
The involutivity (with respect to 
$\pi$) of the $H_k$ in the PqN case was discussed in \cite{FMOP2020} and will be further elaborated in Section \ref{sec:involutivity}.
\begin{rem} 
For a more general definition of PqN manifold, see \cite{BursztynDrummond2019}.
See also \cite{BursztynDrummondNetto2021}, where the PqN structures are recast in the 
framework of the Dirac-Nijenhuis ones. Another interesting generalization, given by the so called PqN manifolds with background, was considered in
\cite{Antunes2008,C-NdC-2010}.
\end{rem}


\section{Deformations of PqN manifolds}
\label{igor-connection}

In this section we remind a result concerning the deformations of a PqN structure. 
To do that, we need to recap a few well known notions. 

First of all we recall that given a tensor field $N:T\M\to T\M$, the usual Cartan differential can be modified as 
\begin{equation}
\label{eq:dNd}
\d_N=i_N\circ \d-\d\circ i_N,
\end{equation}
where $i_N$ is given by \eqref{iNalpha}.
We also remind that one can define a Lie bracket between 1-forms on a Poisson manifold $(\M,\pi)$ as
\begin{equation}
\label{eq:liealgpi}
[\alpha,\beta]_\pi=L_{\pi^\sharp\alpha}\beta-L_{\pi^\sharp\beta}\alpha-\d\langle\beta,\pi^\sharp\alpha\rangle,
\end{equation}
and that this Lie bracket 
can be uniquely extended to all forms on $\M$ in such a way that, if $\eta$ is a $q$-form and $\eta'$ is a $q'$-form, then 
$[\eta,\eta']_\pi$ is a $(q+q'-1)$-form and 
\begin{itemize}
\item[(K1)] $[\eta,\eta']_\pi=-(-1)^{(q-1)(q'-1)}[\eta',\eta]_\pi$; 
\item[(K2)] $[\alpha,f]_\pi=i_{\pi^\sharp\alpha}\,\d f=\langle \d f,\pi^\sharp\alpha\rangle$ for all $f\in C^\infty(\M)$ and for all 1-forms $\alpha$;
\item[(K3)] 
$[\eta,\cdot]_\pi$ 
is a derivation of degree $q-1$ of the wedge product, that is, 
for any differential form $\eta''$,
\begin{equation}
\label{deriv-koszul}
[\eta,\eta'\wedge\eta'']_\pi=[\eta,\eta']_\pi\wedge\eta''+(-1)^{(q-1)q'}\eta'\wedge[\eta,\eta'']_\pi.
\end{equation}
\end{itemize}
This extension is a {\it graded\/} Lie bracket, in the sense that (besides (K1)) the graded Jacobi identity holds,
\begin{equation}
\label{graded-jacobi}
(-1)^{(q_1-1)(q_3-1)}[\eta_1,[\eta_2,\eta_3]_\pi]_\pi+(-1)^{(q_2-1)(q_1-1)}[\eta_2,[\eta_3,\eta_1]_\pi]_\pi+(-1)^{(q_3-1)(q_2-1)}[\eta_3,[\eta_1,\eta_2]_\pi]_\pi=0,
\end{equation}
where $q_i$ is the degree of $\eta_i$.
It is sometimes called the Koszul bracket --- see, e.g., \cite{FiorenzaManetti2012} and references therein. We warn the reader that the Koszul bracket used in \cite{FMOP2020} is the opposite of the one used here, since a minus sign in (K2) was inserted. 

The following result has been proved in \cite{DMP2024}, generalizing that in \cite{FMP2023}, where the starting point is a PN manifold.
\begin{theorem}
\label{thm:gim}
Let $(\M,\pi,N,\phi)$ be a PqN manifold and let $\Omega$ be a closed 2-form. Define as usual $\Omega^\flat:T\M\to T^*\M$ as 
$\Omega^\flat(X)=i_X\Omega$. If 
\begin{equation}
\label{N-phi-hat}
\widehat N=N+\pi^\sharp\,\Omega^\flat\qquad\mbox{and}\qquad\widehat\phi=\phi+\d_N\Omega+\frac{1}{2}[\Omega,\Omega]_\pi,
\end{equation} 
then $(\M,\pi,\widehat N,\widehat\phi)$ is a PqN manifold. 
\end{theorem}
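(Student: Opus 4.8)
The plan is to check the three defining properties of a PqN manifold for $(\M,\pi,\widehat N,\widehat\phi)$ one at a time, taking as given the corresponding properties of $(\M,\pi,N,\phi)$, the hypothesis $\d\Omega=0$, and the graded calculus recalled in Section \ref{igor-connection}. Throughout I would write $N_\Omega:=\pi^\sharp\Omega^\flat$, so that $\widehat N=N+N_\Omega$ and, by the linearity of \eqref{iNalpha} in the tensor, $i_{\widehat N}=i_N+i_{N_\Omega}$.

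The algebraic half of the compatibility \eqref{N-P-compatible}, namely $\widehat N\pi^\sharp=\pi^\sharp\widehat N^*$, is immediate linear algebra: since $\pi$ and $\Omega$ are skew one has $(\pi^\sharp)^*=-\pi^\sharp$ and $(\Omega^\flat)^*=-\Omega^\flat$, whence $\widehat N^*=N^*+\Omega^\flat\pi^\sharp$, and both $\widehat N\pi^\sharp$ and $\pi^\sharp\widehat N^*$ reduce to $N\pi^\sharp+\pi^\sharp\Omega^\flat\pi^\sharp$, the two agreeing by the property $N\pi^\sharp=\pi^\sharp N^*$ for $N$. For the differential half of \eqref{N-P-compatible} I would exploit the fact that the left-hand side is \emph{linear} in the $(1,1)$ tensor, so substituting $\widehat N=N+N_\Omega$ splits it into the $N$-contribution, which vanishes because $(\pi,N)$ is compatible, and the $N_\Omega$-contribution; the latter vanishes by the standard fact that the differential compatibility of $\pi$ with $\pi^\sharp\Omega^\flat$ is equivalent to the closedness of $\Omega$, which is exactly our hypothesis.

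The torsion axiom is the technical core. I would establish the torsion-deformation identity
\[
T_{\widehat N}(X,Y)=T_N(X,Y)+\pi^\sharp\!\left(i_{X\wedge Y}\big(\d_N\Omega+\tfrac12[\Omega,\Omega]_\pi\big)\right),
\]
after which the required property $T_{\widehat N}(X,Y)=\pi^\sharp(i_{X\wedge Y}\widehat\phi)$ follows at once from $T_N(X,Y)=\pi^\sharp(i_{X\wedge Y}\phi)$ and the definition of $\widehat\phi$ in \eqref{N-phi-hat}. To prove the identity I would expand $T_{\widehat N}$ using that the Nijenhuis torsion \eqref{tndef1} is quadratic in its tensor argument and polarizes through the Frölicher--Nijenhuis bracket, giving $T_{\widehat N}=T_N+T_{N_\Omega}+[N,N_\Omega]_{\mathrm{FN}}$; I would then identify $T_{N_\Omega}$ with the $\tfrac12[\Omega,\Omega]_\pi$ term and the cross term $[N,N_\Omega]_{\mathrm{FN}}$ with the $\d_N\Omega$ term, converting the Lie derivatives into the operators $\d$, $i_N$ and the Koszul bracket via \eqref{eq:dNd} and \eqref{eq:liealgpi} and leaning on $N\pi^\sharp=\pi^\sharp N^*$ together with $\d\Omega=0$.

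It remains to prove that $\widehat\phi$ and $i_{\widehat N}\widehat\phi$ are closed. For $\widehat\phi$ this is short: $\phi$ is closed by hypothesis, $\d(\d_N\Omega)=-\d_N(\d\Omega)=0$ because $\d\,\d_N=-\d_N\,\d$ (a direct consequence of \eqref{eq:dNd} and $\d^2=0$), and $\d[\Omega,\Omega]_\pi=0$ because $\d$ is a graded derivation of the Koszul bracket when $\pi$ is Poisson, so $\d[\Omega,\Omega]_\pi$ is a multiple of $[\d\Omega,\Omega]_\pi=0$. The closedness of $i_{\widehat N}\widehat\phi$ is, I expect, the main obstacle: here the contributions do not decouple, and expanding
\[
i_{\widehat N}\widehat\phi=i_N\phi+i_N\big(\d_N\Omega+\tfrac12[\Omega,\Omega]_\pi\big)+i_{N_\Omega}\widehat\phi
\]
one can invoke the already known closedness of $i_N\phi$, but the remaining terms must be controlled by the compatibility of $(\pi,N)$ and the graded Jacobi identity \eqref{graded-jacobi} rather than term by term. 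I would therefore devote the bulk of the argument to assembling the commutation relations among $i_N$, $i_{N_\Omega}$, $\d$, $\d_N$ and $[\cdot,\cdot]_\pi$ needed to show that this last $3$-form is closed.
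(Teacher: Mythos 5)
A preliminary remark on the comparison itself: the paper contains no proof of Theorem \ref{thm:gim}. It is explicitly imported from \cite{DMP2024} (generalizing \cite{FMP2023}), where it is obtained by recasting PqN structures in the language of quasi-Lie bialgebroids and Dirac structures and interpreting the $\Omega$-deformation as a gauge transformation. Your proposal is therefore a direct verification of the four Sti\'enon--Xu axioms, which is a genuinely different (and in principle legitimate) route. The parts you do carry out are sound: the algebraic half of the compatibility is indeed immediate; the differential half does split by linearity in the $(1,1)$ tensor, with the $\pi^\sharp\Omega^\flat$ contribution killed by $\d\Omega=0$ (though note that closedness is \emph{sufficient} for that compatibility, not equivalent to it --- for degenerate $\pi$ the converse fails, so "equivalent" is an overstatement, harmless here); the polarization $T_{\widehat N}=T_N+T_{\pi^\sharp\Omega^\flat}+[N,\pi^\sharp\Omega^\flat]_{\mathrm{FN}}$ together with the identifications of $T_{\pi^\sharp\Omega^\flat}$ with $\tfrac12[\Omega,\Omega]_\pi$ and of the cross term with $\d_N\Omega$ is exactly the computational core of \cite{FMP2023}; and the closedness of $\widehat\phi$ follows as you say from $\d\,\d_N=-\d_N\,\d$ and the fact that $\d$ is a derivation of the Koszul bracket.

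The genuine gap is the one you yourself flag and then leave open: the closedness of $i_{\widehat N}\widehat\phi$. You give no argument for it, only the intention to "assemble the commutation relations" among $i_N$, $i_{\pi^\sharp\Omega^\flat}$, $\d$, $\d_N$ and $[\cdot,\cdot]_\pi$. This cannot be waved through: in the Sti\'enon--Xu definition the closedness of $i_N\phi$ is an independent axiom, not a consequence of compatibility, of $\d\phi=0$ and of the torsion condition, so nothing you have established forces $\d\bigl(i_{\widehat N}\widehat\phi\bigr)=0$. Expanding it produces cross terms such as $\d\bigl(i_{\pi^\sharp\Omega^\flat}\,\d_N\Omega\bigr)$, $\d\bigl(i_N[\Omega,\Omega]_\pi\bigr)$ and $\d\bigl(i_{\pi^\sharp\Omega^\flat}\phi\bigr)$, whose mutual cancellation requires the graded Jacobi identity \eqref{graded-jacobi}, the compatibility of $(\pi,N)$ \emph{and} the torsion identity for the original $\phi$, all at once; this is precisely the computation that the quasi-Lie-bialgebroid formalism of \cite{DMP2024} is designed to bypass, by encoding the two closedness conditions structurally in identities that are manifestly preserved under the gauge transformation by a closed $\Omega$. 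As written, your proposal establishes three of the four axioms and replaces the fourth --- arguably the hardest one to verify by hand --- with a promissory note, so the proof is incomplete.
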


A preliminary version of this theorem was applied in \cite{FMP2023} to the classical closed Toda lattice --- more about this in Section \ref{sec:Toda}.

\section{Involutivity theorems for PqN manifolds}
\label{sec:involutivity}

In Section \ref{sec:PN-PqN} we recalled that the traces $H_k$ of the powers of $N$ are in involution in the PN case. This is not always true on a PqN manifold.
Some sufficient conditions for the involutivity of the $H_k$ were found in \cite{FMOP2020}. In this section we state Theorem \ref{thm:involution}, an improved version of that result. Moreover, using the notions of generalized Lenard-Magri chain, we prove an involutivity result for PqN manifolds that are deformations of PN manifold.

It is well known that, for any tensor field $N$ of type (1,1),   
\begin{equation}
\label{tndef2}
T_N(X,Y)=(L_{NX}N-NL_XN)Y,
\end{equation}
so that 
\begin{equation}
\label{ff1}
i_XT_N=L_{NX}N-N\, L_X N,
\end{equation}
where $i_XT_N$ is the $(1,1)$ tensor field 
defined as $(i_XT_N)(Y)=T_N(X,Y)$. 

Now let $(\M,\pi,N,\phi)$ be a PqN manifold. To study the involutivity of the functions $H_k=\frac1{2k}\Tr N^k$, it was noticed 
in \cite{FMOP2020} that, for $k\ge 1$ and for a generic vector field $X$ on $\M$, 
\begin{equation}
\label{req1}
\begin{split}
\brke{\d H_{k+1}}{X}&= L_X\left(\frac{1}{2(k+1)} \Tr(N^{k+1})\right)=\frac12\Tr\left((NL_X N)N^{k-1}\right)\\
&\buildrel(\ref{ff1})\over=\frac12\Tr\left(L_{NX}(N)N^{k-1}\right)-\frac12\Tr\left((i_XT_N)\, N^{k-1}\right)\\ 
&=L_{NX}\left(\frac1{2k} \Tr(N^{k})\right)-\frac12\Tr\left((i_XT_N)\, N^{k-1}\right)\\ 
&=\brke{\d H_{k}}{NX}-\frac12\Tr\left((i_XT_N)\, N^{k-1}\right)\\ 
&=\brke{N^*\d H_{k}}{X}-\frac12\Tr\left((i_XT_N)\, N^{k-1}\right).
\end{split}
\end{equation}
Hence the 
recursion 
relations 
\begin{equation}
\label{ff2}
N^* \d H_{k}=\d H_{k+1}+{\phi_{k-1}}
\end{equation}
were obtained, where 
\begin{equation}
\label{varphi}
\brke{\phi_k}{ X}= \frac12\Tr\left((i_XT_N)\, N^{k}\right)= \frac12\Tr\left(N^{k}\, (i_XT_N)\right),\qquad k\ge 0.
\end{equation}
Finally, the formula 
\begin{equation}
\label{recadd}
\parpo{H_k}{H_j}-\parpo{H_{k-1}}{H_{j+1}}=-\brke{ \phi_{j-1}}{\pi\, \d H_{k-1}}-\brke{\phi_{k-2}}{ \pi\, \d H_j},
\qquad k>j \ge 1,
\end{equation}
was proved, where $\{\cdot,\cdot\}$ is the Poisson bracket corresponding to $\pi$.
\begin{rem} 
Notice that:
\begin{itemize} 
\item the 1-forms $\phi_k$ and 
relation (\ref{ff2}) were used in \cite{Bogo96-180,Bogo96-182} for different purposes;
\item the 1-forms called $\phi_k$ in \cite{FMOP2020} are twice the ones in (\ref{varphi}), because in that paper we considered the functions 
$\frac1{k}\Tr N^k$ instead of the $H_k$.
\end{itemize}
\end{rem}
We are ready to state and prove 
\begin{theorem} 
\label{thm:involution}
Let $(\M,\pi,N,\phi)$ be a PqN manifold,  
and $H_k=\frac1{2k}\Tr({N}^k)$. Suppose that there exists a 2-form $\Omega$ such that:
\begin{itemize}
\item[$(a)$] $\phi=-2\,\d H_1\wedge\Omega$; 
\item[$(b)$] $\Omega(X_j,{Y_k})=0$ for all $j,k\ge 1$, where $Y_k=N^{k-1}X_1-X_k$ and $X_k=\pi^\sharp\,\d H_k$.
\end{itemize}
Then 
$\parpo{H_j}{H_k}=0$ for all $j,k\ge 1$. 
\end{theorem}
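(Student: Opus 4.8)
The plan is to feed the structural identity \req{recadd} into an induction on the larger index of the bracket, after first rewriting its right-hand side by means of hypotheses $(a)$ and $(b)$. Throughout I abbreviate $X_k=\pi^\sharp\d H_k$ (as in $(b)$) and $P=\pi^\sharp\Omega^\flat$, so that the deformed operator of Theorem \ref{thm:gim} is $N+P$. The whole difficulty is concentrated in controlling the ``anomaly'' 1-forms $\phi_k$ of \req{varphi}, whose pairings with $X_{k-1}$ and $X_j$ drive \req{recadd}.

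First I would compute $\phi_k$ explicitly from $(a)$. Writing $\phi=-2\,\d H_1\wedge\Omega$ and expanding $T_N(X,Y)=\pi^\sharp(i_{X\wedge Y}\phi)$, the torsion takes the form $T_N(X,Y)=-2\bigl(\d H_1(X)\,PY-\d H_1(Y)\,PX+\Omega(X,Y)\,X_1\bigr)$, so that $i_XT_N$ is the sum of a multiple of $P$ and two rank-one operators. Taking $\tfrac12\Tr(\,\cdot\,N^k)$ and using the compatibility consequence $N^k\pi^\sharp=\pi^\sharp(N^*)^k$ to rewrite $\d H_1(N^kPX)=-\Omega(X,N^kX_1)$, the two ``$\Omega$'' contributions merge and I obtain the compact expression
\begin{equation*}
\brke{\phi_k}{X}=-\Tr(PN^k)\,\d H_1(X)-2\,\Omega(X,N^kX_1).
\end{equation*}
This is the key computational step; the rank-one/trace bookkeeping is routine once the torsion is in the displayed form.

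Next I would substitute this into \req{recadd} with $X=X_{k-1}$ and $X=X_j$. The term proportional to $\d H_1$ yields $\brke{\d H_1}{X_m}=\parpo{H_m}{H_1}$, while the $\Omega$-terms, after invoking $(b)$ to replace $\Omega(X_i,N^{m-1}X_1)$ by $\Omega(X_i,X_m)$, combine into $2\Omega(X_{k-1},X_j)+2\Omega(X_j,X_{k-1})$ and cancel by antisymmetry of $\Omega$. What survives is the reduced recursion
\begin{equation*}
\parpo{H_k}{H_j}=\parpo{H_{k-1}}{H_{j+1}}+\Tr(PN^{j-1})\parpo{H_{k-1}}{H_1}+\Tr(PN^{k-2})\parpo{H_j}{H_1},\qquad k>j\ge1.
\end{equation*}

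Finally I would run a strong induction on $k$, with hypothesis $IH(k)$: all brackets among $H_1,\dots,H_k$ vanish. The base $IH(1)$ is trivial. Assuming $IH(k-1)$, the last two terms above vanish because $\parpo{H_{k-1}}{H_1}$ and $\parpo{H_j}{H_1}$ involve only indices $\le k-1$; for $1\le j\le k-2$ the term $\parpo{H_{k-1}}{H_{j+1}}$ also vanishes by $IH(k-1)$, giving $\parpo{H_k}{H_j}=0$, while for the borderline value $j=k-1$ one uses $\parpo{H_{k-1}}{H_k}=-\parpo{H_k}{H_{k-1}}$ to obtain $2\parpo{H_k}{H_{k-1}}=0$. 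Together with antisymmetry this establishes $IH(k)$, and the theorem follows. The main obstacle is precisely that the right-hand side of \req{recadd} does \emph{not} vanish termwise: the coefficients $\Tr(PN^{\bullet})$ are in general nonzero, so a naive telescoping fails, and the residual terms can be eliminated only by carrying the involutivity of the lower-order integrals as an inductive hypothesis.
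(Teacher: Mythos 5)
Your proposal is correct and takes essentially the same route as the paper: the same explicit computation of $T_N$ and hence of $\phi_k$ from hypothesis $(a)$, the same cancellation of the $\Omega$-terms in \req{recadd} via hypothesis $(b)$ and the skew-symmetry of $\Omega$, and an induction that kills the trace-coefficient terms by carrying the involutivity of lower-order integrals. The only (cosmetic) difference is the bookkeeping of the induction --- you induct on the largest index and use the inductive hypothesis to absorb $\parpo{H_{k-1}}{H_{j+1}}$ in one step, whereas the paper inducts on $l+m$ and telescopes along the anti-diagonal down to the skew-symmetric endpoint $\parpo{H_m}{H_l}$.
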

\begin{proof} Since on a PqN manifold, for all vector fields $X,Y$, 
\begin{equation}
\label{eq:pre-ThatN}
T_{N}(X,Y)=\pi^\sharp(i_{X\wedge Y}\phi),
\end{equation}
we have that
\begin{equation}
\label{eq:ThatN}
\begin{split}
T_{N}(X,Y)& 
\stackrel{(a)}{=}-2\pi^\sharp(i_Yi_X (\d H_1\wedge\Omega))
=-2\pi^\sharp(i_Y(\langle \d H_1,X\rangle\Omega-\d H_1\wedge i_X\Omega))\\
&=-2\pi^\sharp(\langle \d H_1,X\rangle i_Y\Omega-\langle \d H_1,Y\rangle i_X\Omega+(i_Yi_X\Omega) \d H_1)\\
&=-2\langle \d H_1,X\rangle(\pi^\sharp\,\Omega^\flat)(Y)+2\langle \d H_1,Y\rangle(\pi^\sharp\,\Omega^\flat)(X)-2\Omega(X,Y)X_1,
\end{split}
\end{equation}
so that 
\begin{equation}
\label{eq:5}
i_XT_{N}=-2\langle \d H_1,X\rangle\,\pi^\sharp\,\Omega^\flat
+2(\pi^\sharp\,\Omega^\flat)(X)\otimes \d H_1-2X_1\otimes \Omega^\flat X.
\end{equation}
Therefore 
\begin{equation}
\label{eq-pre-tr}
\begin{split}
\brke{\phi_k}{X} &\stackrel{\eqref{varphi}}{=}\frac12\Tr\left({N}^k(i_{X}T_{N})\right)
=
\Tr\left({N}^k\left(-\langle \d H_1,X\rangle\,\pi^\sharp\,\Omega^\flat
+(\pi^\sharp\,\Omega^\flat)  (X)\otimes \d H_1
-X_1\otimes \Omega^\flat X\right)\right)\\
&=-
\langle \d H_1,X\rangle\Tr\left({N}^k\pi^\sharp\,\Omega^\flat\right)
+
\Tr\left(({N}^k\pi^\sharp\,\Omega^\flat)(X)\otimes \d H_1\right)
-
\Tr\left(({N}^k X_1)\otimes \Omega^\flat X\right).
\end{split}
\end{equation}
Now observe that the last two terms in the last line of \eqref{eq-pre-tr} sum up to $-2\Omega(X,N^kX_1)$. In fact,
\begin{equation}
\Tr(X\otimes\alpha)=\brke{\alpha}{X}\label{eq:identi} 
\end{equation} 
entails that $-
\Tr\left(({N}^k X_1)\otimes \Omega^\flat X\right)=-\Omega(X,N^kX_1)$ and
\begin{equation}
\label{eqtr-b}
\begin{split}
\Tr\left(({N}^k\pi^\sharp\,
\Omega^\flat)(X)\otimes \d H_1\right)
&=\brke{\d H_1}{({N}^k\pi^\sharp\,
\Omega^\flat)(X)}
=\brke{\d H_1}{(\pi^\sharp ({N}^*)^k \Omega^\flat)(X)}\\
&=-\brke{(({N}^*)^k \Omega^\flat)(X)}{X_1}
=-\brke{\Omega^\flat(X)}{{N}^k X_1}
=-\Omega(X,{N}^k X_1).
\end{split}
\end{equation}
The previous remarks imply that, for every vector field $X$,
\begin{equation}
\label{eqtr}
\brke{\phi_k}{X}=
-\langle \d H_1,X\rangle\Tr\left({N}^k\pi^\sharp\,\Omega^\flat\right)-2\Omega(X,{N}^k X_1),
\end{equation}
meaning that
\begin{equation}
\label{eq-phi_k}
\phi_k=2\Omega^\flat({N}^k X_1)-\Tr\left({N}^k\pi^\sharp\,\Omega^\flat\right)\d H_1.
\end{equation}
Putting 
$X=X_j$ in (\ref{eqtr}), one has
\begin{equation}
\label{eqtr2}
\brke{\phi_k}{X_j}=-\parpo{H_1}{H_j}\Tr\left({N}^k\pi^\sharp\,\Omega^\flat\right)-2\Omega(X_j,{N}^k X_1).
\end{equation}
To prove that the traces $H_k$ of the powers of $N$ are in involution, we show by induction on $n$ 
that the following property holds:
\begin{equation*}
(P_n):\qquad\parpo{H_l}{H_m}=0\quad\mbox{for all pairs $(l,m)$ such that $l+m\le n$.}
\end{equation*}
This is obviously true for $n=2$. Let us show that $(P_n)$ implies $(P_{n+1})$. If $l+m=n+1$ and $l>m$, then 
\begin{equation}
\label{recadd2}
\begin{split}
\parpo{H_l}{H_m}&\stackrel{\eqref{recadd}}{=}\parpo{H_{l-1}}{H_{m+1}}-\brke{ \phi_{m-1}}{X_{l-1}}-\brke{\phi_{l-2}}{X_m}\\
&\stackrel{\eqref{eqtr2}}{=}\parpo{H_{l-1}}{H_{m+1}}+\parpo{H_1}{H_{l-1}}\Tr\left({N}^{m-1}\pi^\sharp\,\Omega^\flat\right)+2\Omega(X_{l-1},{N}^{m-1} X_1)\\
&\ \ \ +\parpo{H_1}{H_m}\Tr\left({N}^{l-2}\pi^\sharp\,\Omega^\flat\right)+2\Omega(X_m,{N}^{l-2} X_1).
\end{split}
\end{equation}
Since $\parpo{H_1}{H_{l-1}}=\parpo{H_1}{H_m}=0$ by the induction hypothesis, we obtain
\begin{equation}
\label{recadd3}
\parpo{H_l}{H_m}
=\parpo{H_{l-1}}{H_{m+1}}+2\Omega(X_{l-1},{N}^{m-1} X_1)+2\Omega(X_m,{N}^{l-2} X_1).
\end{equation}
Now, thanks to assumption (b), we can substitute ${N}^{i-1}X_1$ with $X_{i}$ in the 
last two terms, showing that their sum vanishes. Hence we obtain that 
\begin{equation}
\label{recadd4}
\parpo{H_l}{H_m}=\parpo{H_{l-1}}{H_{m+1}}.
\end{equation}
In the same way, we can show that $\parpo{H_{l-1}}{H_{m+1}}=\parpo{H_{l-2}}{H_{m+2}}$ and so on, until we obtain that 
\begin{equation}
\label{recadd5}
\parpo{H_l}{H_m}=\parpo{H_{m}}{H_{l}}=0,
\end{equation}
so that $(P_{n+1})$ holds.
\end{proof}  
\begin{rem}
\begin{itemize} 
\item[]
\item This result can be applied to the PqN structure of the classical closed Toda lattice found in \cite{FMOP2020} --- see that paper for the proof that the hypotheses of Theorem \ref{thm:involution} are satisfied. We will show in Section \ref{sec:Toda} and in Appendix A that it can be applied also to other Toda lattices.
\item It can be checked that condition $(a)$ does not hold for the 
PqN manifold associated with the 3-particle Calogero model, meaning that it is not necessary for the involutivity.
\end{itemize} 
\end{rem}

In the next section we will need the following modified version of 
Theorem \ref{thm:involution}, concerning the case where the PqN manifold is a deformation of a PN manifold.

\begin{theorem} 
\label{thm:involution2}
Let $(\M,\pi,N)$ be a PN manifold and $\Omega$ a closed 2-form on $\M$. If 
\begin{equation}
\label{N_pm}
N_{\pm}=N\pm\pi^\sharp\Omega^\flat,\qquad
\phi_\pm=\pm \d_N\Omega+\frac12[\Omega,\Omega]_\pi, 
\end{equation}
then we know from Theorem \ref{thm:gim} that $(\M,\pi,N_\pm,\phi_\pm)$ are PqN manifolds. Define $H_k^+=\frac1{2k}\Tr({N}_+^k)$ and suppose that:
\begin{enumerate}
\item[$(a')$] $\phi_+=-2\,\d H^+_1\wedge\Omega$; 
\item[$(b')$] $\Omega^\flat({Y_k^+})=0$ for all $k\ge 1$, where $Y_k^+=N_+^{k-1}X^+_1-X^+_k$ and $X^+_k=\pi^\sharp\,\d H^+_k$.
\end{enumerate}
Then:
\begin{itemize}
\item[i)] the functions $H_k^+$ form a generalized Lenard-Magri chain 
with respect to $N_-$, in the sense that
\begin{equation}
\label{gen-LM-chain}
N_-^* \d H_k^+ = \d H_{k+1}^+ + f_k \d H_1^+,
\end{equation}
where $f_k=-\Tr\left({N}_+^{k-1}\pi^\sharp\,\Omega^\flat\right)$;
\item[ii)] 
$\parpo{H^+_j}{H^+_k}=0$ for all $j,k\ge 1$.
\end{itemize}
\end{theorem}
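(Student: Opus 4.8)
The plan is to transfer all the identities already established for a generic PqN manifold to the specific PqN manifold $(\M,\pi,N_+,\phi_+)$ produced by Theorem~\ref{thm:gim}, and only at the very end to convert $N_+^*$ into $N_-^*$. The first observation is that hypothesis $(a')$ is literally hypothesis $(a)$ of Theorem~\ref{thm:involution} read on $(\M,\pi,N_+,\phi_+)$, under the substitutions $N\mapsto N_+$, $\phi\mapsto\phi_+$, $H_1\mapsto H_1^+$. Consequently the computation leading to \eqref{eq-phi_k} in the proof of Theorem~\ref{thm:involution} applies verbatim and yields, for the $1$-forms $\phi_k^+$ associated with $N_+$ through \eqref{varphi},
\[
\phi_k^+ = 2\,\Omega^\flat(N_+^k X_1^+) - \Tr\!\left(N_+^k\pi^\sharp\,\Omega^\flat\right)\d H_1^+ .
\]

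For part (i) I would start from the universal recursion \eqref{ff2} applied to $N_+$, namely $N_+^*\,\d H_k^+ = \d H_{k+1}^+ + \phi_{k-1}^+$, and substitute the formula above to obtain $N_+^*\,\d H_k^+ = \d H_{k+1}^+ + 2\,\Omega^\flat(N_+^{k-1}X_1^+) + f_k\,\d H_1^+$, with $f_k=-\Tr(N_+^{k-1}\pi^\sharp\Omega^\flat)$. The next step is to trade $N_+^*$ for $N_-^*$: since $N_-=N_+-2\pi^\sharp\Omega^\flat$, we have $N_-^*=N_+^*-2(\pi^\sharp\Omega^\flat)^*$, and a short transpose computation, using the conventions $\langle\beta,\pi^\sharp\alpha\rangle=\pi(\alpha,\beta)$ and $\Omega^\flat(X)=i_X\Omega$, gives the key identity $(\pi^\sharp\Omega^\flat)^*\,\d H_k^+ = \Omega^\flat(X_k^+)$. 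Inserting this, the two $\Omega^\flat$ terms combine as $2\,\Omega^\flat(N_+^{k-1}X_1^+)-2\,\Omega^\flat(X_k^+)=2\,\Omega^\flat(Y_k^+)$, which vanishes by $(b')$. What remains is exactly $N_-^*\,\d H_k^+ = \d H_{k+1}^+ + f_k\,\d H_1^+$, which is \eqref{gen-LM-chain}.

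For part (ii) I would observe that $(b')$ is in fact stronger than hypothesis $(b)$ of Theorem~\ref{thm:involution} for $(\M,\pi,N_+,\phi_+)$: indeed $\Omega^\flat(Y_k^+)=0$ forces $\Omega(X_j^+,Y_k^+)=-\langle\Omega^\flat(Y_k^+),X_j^+\rangle=0$ for all $j,k\ge 1$. Since $(a')$ coincides with $(a)$, every hypothesis of Theorem~\ref{thm:involution} holds on $(\M,\pi,N_+,\phi_+)$, so the involutivity $\parpo{H_j^+}{H_k^+}=0$ follows at once. (One may also read this involutivity off the generalized chain of part (i) together with \eqref{recadd}, since there every correction term is a multiple of $\d H_1^+$.)

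I expect the only delicate point to be the sign bookkeeping in the transpose identity $(\pi^\sharp\Omega^\flat)^*\,\d H_k^+=\Omega^\flat(X_k^+)$ and in the relation $N_-^*=N_+^*-2(\pi^\sharp\Omega^\flat)^*$: a misplaced sign would turn the combination $2\,\Omega^\flat(N_+^{k-1}X_1^+)-2\,\Omega^\flat(X_k^+)$ into a sum rather than the difference $2\,\Omega^\flat(Y_k^+)$, and the cancellation forced by $(b')$ would fail. Everything else is a direct transcription, to $N_+$, of the identities already proved in the run-up to and inside Theorem~\ref{thm:involution}.
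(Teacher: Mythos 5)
Your proposal is correct and follows essentially the same route as the paper: for (i) it specializes \eqref{ff2} and \eqref{eq-phi_k} to $(\M,\pi,N_+,\phi_+)$, uses $(b')$ to identify $\Omega^\flat(N_+^{k-1}X_1^+)$ with $\Omega^\flat(X_k^+)$, and converts $N_+^*$ to $N_-^*$ via $N_+^*=N_-^*+2\Omega^\flat\pi^\sharp$ (your transpose identity $(\pi^\sharp\Omega^\flat)^*\d H_k^+=\Omega^\flat(X_k^+)$ is exactly this, with the signs right); for (ii) it invokes Theorem \ref{thm:involution}, just as the paper does before it adds an optional direct induction based on \eqref{gen-LM-chain}.
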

\noindent
\begin{proof}
{\ }\par\noindent
i) Let us consider the PqN manifold $(\M,\pi,N_+,\phi_+)$, and the corresponding relations (\ref{ff2}) and (\ref{eq-phi_k}), that is,
\begin{equation}
\label{ff2+eq-phi_k+}
N_+^* \d H^+_{k}=\d H^+_{k+1}+{\phi^+_{k-1}},\quad\text{where } \quad \phi^+_k=2\Omega^\flat({N}_+^k X_1^+)-\Tr\left({N}_+^k\pi^\sharp\,\Omega^\flat\right)\d H^+_1.
\end{equation}
Thanks to assumption $(b')$, we have that 
\begin{equation}
N_+^* \d H^+_{k}=\d H^+_{k+1}+2\Omega^\flat (X_k^+) -\Tr\left({N}_+^{k-1}\pi^\sharp\,\Omega^\flat\right)\d H^+_1.
\end{equation}
Hence the thesis follows from $N^*_+=N^*_- +2\Omega^\flat\pi^\sharp$.
\par\noindent
ii) It is clear that the ${H^+_k}$ are in involution as a consequence of Theorem \ref{thm:involution}. However, we think it is worthwhile to show that their involutivity follows from relations (\ref{gen-LM-chain}). This can be deduced by Proposition 5.1 in \cite{Tondo1995}, showing that the ${H^+_k}$ form a Nijenhuis chain, in the terminology of \cite{FMT2000} --- see also \cite{Magri2003} for the related notion of Lenard chain.
For the sake of consistency, 
we give a direct proof similar to that of Theorem \ref{thm:involution}, 
showing by induction on $n$ that the following property holds:
\begin{equation*}
(P_n):\qquad\parpo{H^+_l}{H^+_m}=0\quad\mbox{for all pairs $(l,m)$ such that $l+m\le n$.}
\end{equation*}
Since $(P_2)$ is trivial, we are left with showing that $(P_n)$ implies $(P_{n+1})$. If $l+m=n+1$ and $l<m$, then 
\begin{equation*}
\begin{aligned}
 \parpo{H^+_l}{H^+_m}
 &=\langle \d H^+_l,\pi^\sharp \d H^+_m\rangle
 =\langle \d H^+_l,\pi^\sharp \left(N^*_-\d H^+_{m-1}-f_{m-1}\d H^+_1\right)\rangle\\
 &=\langle \d H^+_l,N_-\pi^\sharp \d H^+_{m-1}\rangle-f_{m-1}\parpo{H^+_l}{H^+_1}
 =\langle N^*_-\d H^+_l,\pi^\sharp \d H^+_{m-1}\rangle,
\end{aligned}
\end{equation*}
since $l+1<m+1\le m+l=n+1$ implies that $\parpo{H^+_l}{H^+_1}=0$. Therefore
\begin{equation*}
\begin{aligned}
 \parpo{H^+_l}{H^+_m}
 &=\langle N^*_-\d H^+_l,\pi^\sharp \d H^+_{m-1}\rangle
 =\langle \d H^+_{l+1}+f_l \d H^+_1,\pi^\sharp \d H^+_{m-1}\rangle
 =\parpo{H^+_{l+1}}{H^+_{m-1}}+f_l \parpo{H^+_1}{H^+_{m-1}}\\
 &=\parpo{H^+_{l+1}}{H^+_{m-1}},
\end{aligned}
\end{equation*}
since $1+(m-1)=m<m+l=n+1$ implies that $\parpo{H^+_1}{H^+_{m-1}}=0$. 
In the same way, we can show that $\parpo{H^+_{l+1}}{H^+_{m-1}}=\parpo{H^+_{l+2}}{H^+_{m-2}}$ and so on, until we obtain that 
\begin{equation*}
\label{recadd5+}
\parpo{H^+_l}{H^+_m}=\parpo{H^+_{m}}{H^+_{l}}=0,
\end{equation*}
so that $(P_{n+1})$ holds.
\end{proof}
\begin{rem}
\begin{itemize}
\item[]
\item The factor -2 in assumptions $(a)$ and $(a')$ of Theorem \ref{thm:involution} and, respectively, Theorem \ref{thm:involution2}, was chosen to simplify the application of these results to the Toda cases.
\item The hypotheses of Theorem \ref{thm:involution2} are stronger than those of Theorem \ref{thm:involution}. However, the 
recursion relations (\ref{gen-LM-chain}) are more transparent then the ones in (\ref{ff2+eq-phi_k+}).
\end{itemize}
\end{rem}


\section{PqN structures for $A^{(1)}_n$- and $C^{(1)}_2$-type Toda lattices}
\label{sec:Toda}

In this section we apply the general setting of the previous one to Toda lattices associated with the affine Lie algebras of type $A_{n}^{(1)}$ and $C_{2}^{(1)}$. We refer to \cite{RSTS} for an introductory survey to 
generalized Toda systems within the Lie-algebraic framework.
Some of the results were already obtained in \cite{FMOP2020} for the classical closed Toda lattice, i.e., the one 
associated with the affine Lie algebra $A_n^{(1)}$. However, in that paper a preliminary 
version of Theorem \ref{thm:involution} was used.

\subsection{The $A_n^{(1)}$ case} The starting point is the PN manifold $(\mathbb R^{2n},\pi,N)$ introduced in \cite{DO} to describe the \emph{open} Toda lattice, i.e., the $A_n$ case. 
The Poisson tensor is the canonical one in the coordinates $(q_i,p_i)$, that is, $\pi=\sum_{i=1}^n \partial_{p_i}\wedge\partial_{q_i}$,
while $N$ is the (torsion free) 
tensor field 
\begin{equation}
\label{N-openToda}
\begin{split}
N&=\sum_{i=1}^{n}p_i\left(\partial_{q_i}\otimes \d q_i+\partial_{p_i}\otimes \d p_i\right)+
\sum_{i<j}\left(\partial_{q_i}\otimes \d p_j-\partial_{q_j}\otimes \d p_i\right)\\
&+\sum_{i=1}^{n-1}{\rm e}^{q_i-q_{i+1}}\left(\partial_{p_{i+1}}\otimes \d q_i-\partial_{p_i}\otimes \d q_{i+1}\right).
\end{split}
\end{equation}
We refer to  Section \ref{sec:Flaschka} for the corresponding matrix expressions in the 4-particle case, from which the general form is easily guessed.
The functions $H_k=\frac1{2k}\Tr(N^k)$ are 
integrals of motion of the open Toda lattice. For example,
\begin{equation}
\label{traces-openToda}
H_1=\sum_{i=1}^{n}p_i\qquad\mbox{and}\qquad H_2=\frac12\sum_{i=1}^{n}p_i^2+\sum_{i=1}^{n-1}{\rm e}^{q_i-q_{i+1}}
\end{equation}
are respectively the total momentum and the energy.

Now let us consider the closed 2-form $\Omega=\mathrm{e}^{q_n-q_1}\d q_n\wedge \d q_1$ on $\RR^{2n}$. Then, according to Theorem 
\ref{thm:gim}, we can deform the PN manifold $(\RR^{2n},\pi,N)$ to obtain two different PqN manifolds, namely, 
$(\RR^{2n},\pi,N_+,\phi_+)$ and $(\RR^{2n},\pi,N_-,\phi_-)$, where $N_\pm$ and $\phi_\pm$ are given by (\ref{N_pm}).
If $H^\pm_k=\frac1{2k}\Tr(N_\pm^k)$, then $H^+_1=H^-_1=H_1$ and 
\begin{equation}
\label{energy-closed-Toda}
H^\pm_2=\frac12\sum_{i=1}^{n}p_i^2+\sum_{i=1}^{n-1}{\rm e}^{q_i-q_{i+1}}\pm {\rm e}^{q_n-q_1},
\end{equation}
so that $H^+_2$ is the energy of the classical closed Toda lattice, while $H^-_2$ describes the case where the interaction between 
the first and the last particle is repulsive. 
It was shown in Theorem 7 of \cite{FMOP2020} (see Remark \ref{rem:phi-pm} below) that:
\begin{enumerate}
\item $\phi_+=-2\,\d H_1^+\wedge\Omega$; 
\item $\Omega^\flat({Y_k^+})=0$ for all $k\ge 1$, where $Y_k^+=N_+^{k-1}X_1^+ -X^+_k$ and $X^+_k=\pi^\sharp\,\d H^+_k$.
\end{enumerate}
Simply using $-\Omega$ instead of $\Omega$, we have that:
\begin{enumerate}
\item[3.] $\phi_-=2\,\d H_1^-\wedge\Omega$; 
\item[4.] $\Omega^\flat({Y_k^-})=0$ for all $k\ge 1$, where $Y_k^-=N_-^{k-1}X^-_1-X^-_k$ and $X^-_k=\pi^\sharp\,\d H^-_k$.
\end{enumerate}
Hence we can apply Theorems \ref{thm:involution} and \ref{thm:involution2} to conclude that:
\begin{itemize}
\item[i)] 
$\parpo{H^+_j}{H^+_k}=\parpo{H^-_j}{H^-_k}=0$ for all $j,k\ge 1$;
\item[ii)] the functions $H_k^\pm$ form a generalized Lenard-Magri chain 
with respect to $N_\mp$, in the sense that
\begin{equation}
\label{gen-LM-chain-pm}
N_-^* \d H_k^+ = \d H_{k+1}^+ + f^+_k \d H_1^+ \qquad\mbox{and}\qquad
N_+^* \d H_k^- = \d H_{k+1}^- + f^-_k \d H_1^-,
\end{equation}
where $f^\pm_k=\mp\Tr\left({N}_\pm^{k-1}\pi^\sharp\,\Omega^\flat\right)$.
\end{itemize}

\begin{rem}
It is easily checked that $ f^+_1= f^-_1=0$, so that the first recursion relations are 
\begin{equation}
\label{gen-LM-chain-pm-first}
N_-^* \d H_1^+ = \d H_{2}^+ \qquad\mbox{and}\qquad
N_+^* \d H_1^- = \d H_{2}^- .
\end{equation}
\end{rem}

\begin{rem}
\label{rem:phi-pm}
In \cite{FMOP2020} it was noticed that $[\Omega,\Omega]_\pi=0$, entailing that $\phi_\pm=\pm\d_N\Omega$, according to (\ref{N_pm}).
Hence the equality $\phi_+=-2\,\d H_1^+\wedge\Omega$ amounts to $\d_N\Omega=-2\,\d H_1^+\wedge\Omega$, which was proved in 
\cite{FMOP2020} (even though the minus sign is missing due to a misprint).
\end{rem}

Theorem \ref{thm:involution} can be applied also to the Toda lattices associated with the affine Lie algebras $C_{n}^{(1)}$ and $A_{2n}^{(2)}$, as shown in Appendix A in full details. 
In this section, for simplicity, we consider the $C_2^{(1)}$ case only.

We remark that, given a bivector $\pi$ and a coordinate system 
$(x_1,\dots, x_n)$ on a manifold $\M$, we have that $X=\pi^\sharp\alpha$ if and only if $X^j=\pi^{ij}\alpha_i$ (summation over repeated indices is understood). 
Since we prefer to use column rather than row vectors, whenever we write $\pi
=A$ and $A$ is a matrix, we mean that the $(i,j)$ entry of $A$ is $\pi^{ji}$. 
For the same reason, when $N$ is a (1,1) tensor field and we write $N=A$, 
we mean that the $(i,j)$ entry of $A$ is $N_j^i$. 
Moreover, in this section, in the next one, and in Appendix B we use (almost everywhere) the same notation for a bivector $P$ and its associated map $P^\sharp$.
\subsection{The $C^{(1)}_2$ case}\label{ss:C2} 
The open Toda lattice corresponding to the Lie algebra $C_2$ 
is the 2-particle 
system whose Hamiltonian is
\[
H_{C_2}=\frac{1}{2}(p_1^2+p_2^2)+e^{q_1-q_2}+e^{2q_2}.
\]
The Poisson bivector, see \cite{daCosta-Damianou},
\begin{equation}
{\pi'}
=\left( \begin {array}{cc|cc} 
    0 & 2p_2 & -p_1^2-2e^{q_1-q_2} & e^{q_1-q_2}-4e^{2q_2}\\
    -2p_2 & 0 & -e^{q_1-q_2} & -p_2^2-2e^{2q_2}\\
    \hline
    p_1^2+2e^{q_1-q_2} & e^{q_1-q_2} & 0 & e^{q_1-q_2}(p_1+p_2)\\
    4e^{2q_2}-e^{q_1-q_2} & p_2^2+2e^{2q_2} & -e^{q_1-q_2}(p_1+p_2) & 0 
\end{array}\right)
\label{eq: C2 Poisson tensor},    
\end{equation}
together with the canonical one, 
\begin{equation*} 
\pi
=\left( \begin {array}{cc|cc} 
0&0&\,1\, &0\\ 
0&0&0&\,1\\ 
\hline
-1&0&0&0\\ 
0&-1&0&0
\end{array}\right),
\end{equation*}
provides a bi-Hamiltonian formulation for the system. Hence, the phase space $\RR^4$ 
is endowed with the PN structure $(\pi,N)$, where 
\begin{equation}
{N}=
{\pi'}\,
\pi
^{-1}=
\left( \begin {array}{cc|cc}
-p_1^2-2e^{q_1-q_2} & e^{q_1-q_2}-4e^{2q_2} & 0 & -2p_2\\
-e^{q_1-q_2} & -p_2^2-2e^{2q_2} & 2p_2 & 0\\
\hline
0 & e^{q_1-q_2}(p_1+p_2) & -p_1^2-2e^{q_1-q_2} & -e^{q_1-q_2}\\
-e^{q_1-q_2}(p_1+p_2) & 0 & e^{q_1-q_2}-4e^{2q_2} & -p_2^2-2e^{2q_2}
\end{array}\right).
\label{eq: open C2 (1,1) tensor}
\end{equation}
By means of the closed 2-form $\Omega_1=\d(e^{-2q_1}\d p_1)=-2e^{-2q_1}\d q_1\wedge \d p_1$, we can use Theorem 
\ref{thm:gim} to deform the PN structure $(\pi,N)$ and we obtain the PqN structure $(\pi,\widehat{N},\widehat{\phi})$, where \begin{equation}
\widehat{N}=N+\pi^{\sharp}\Omega_1^{\flat}=\left( \begin {array}{cc|cc}
-p_1^2-2e^{q_1-q_2}-2e^{-2q_1} & e^{q_1-q_2}-4e^{2q_2} & 0 & -2p_2\\
-e^{q_1-q_2} & -p_2^2-2e^{2q_2} & 2p_2 & 0\\
\hline
0 & e^{q_1-q_2}(p_1+p_2) & -p_1^2-2e^{q_1-q_2}-2e^{-2q_1} & -e^{q_1-q_2}\\
-e^{q_1-q_2}(p_1+p_2) & 0 & e^{q_1-q_2}-4e^{2q_2} & -p_2^2-2e^{2q_2}
\end{array}\right)
\label{eq: periodic C2 (1,1) tensor}
\end{equation}
and $\widehat{\phi}=\d_{N}\Omega_1+\frac{1}{2}\left[\Omega_1,\Omega_1\right]_\pi$. Notice that 
\[
H_1=
\frac{1}{2}\Tr(\widehat{N})=-\left(p_1^2+p_2^2+2e^{q_1-q_2}+2e^{2q_2}+2e^{-2q_1}\right)=-2H_{C_2^{(1)}},
\]
where
\[
H_{C_2^{(1)}}=\frac{1}{2}\left(p_1^2+p_2^2\right)+e^{q_1-q_2}+e^{2q_2}+e^{-2q_1}
\]
is the Hamiltonian of the 2-particle $C_2^{(1)}$-Toda lattice. Notice also that
\begin{equation}
\widehat{\phi}=\d_{N}\Omega_1+\frac{1}{2}{\left[\Omega_1,\Omega_1\right]_\pi}
=\d_{N}\Omega_1,\label{eq: 3-form C2}
\end{equation}
since the vanishing of the Koszul bracket $\left[\Omega_1,\Omega_1\right]_\pi$ follows from easy computations involving the facts that $\Omega_1$ is exact and $\d$ is a derivation of the Koszul bracket.

Moreover,  using $\d_{N}\circ \d=-\d\circ \d_{N}$, $\d_{N}f=N^*(\d f)$ for smooth functions $f$, 
and $\d_{N}(f \d x_i)=(\d_{N}f)\wedge \d x_i+f(\d_{N}(\d x_i))$, one can see that
\begin{equation}
\widehat{\phi}=\left(-4e^{-2q_1}\d q_1\wedge N^*(\d q_1)+2e^{-2q_1}\d(N^*(\d q_1))\right)\wedge \d p_1-2e^{-2q_1}\d q_1\wedge \d(N^*(\d p_1)).\label{eq: phi_expression}
\end{equation}
Using the equations
\begin{equation}\label{eq: differentials}
\begin{aligned}
N^{*}(\d q_1)&=-(p_1^2+2e^{q_1-q_2})\d q_1+(e^{q_1-q_2}-4e^{2q_2})\d q_2-2p_2\d p_2\\
N^*(\d p_1)&=e^{q_1-q_2}(p_1+p_2)\d q_2-(p_1^2+2e^{q_1-q_2})\d p_1-e^{q_1-q_2}\d p_2\\
\d(N^{*}(\d q_1))&=-2p_1\d p_1\wedge \d q_1-e^{q_1-q_2}\d q_1\wedge \d q_2\\
\d(N^*(\d p_1))&=e^{q_1-q_2}(p_1+p_2)\d q_1\wedge \d q_2+e^{q_1-q_2}\d q_2\wedge \d p_1-2e^{q_1-q_2}\d q_1\wedge \d p_1-e^{q_1-q_2}\d q_1\wedge \d p_2
\end{aligned}
\end{equation}
one finds that 
\begin{align*}
\widehat{\phi}&=-2e^{-2q_1}\left[\left(4e^{q_1-q_2}-8e^{2q_2}\right)\d q_1\wedge \d q_2\wedge \d p_1+4p_2\d q_1\wedge \d p_1\wedge \d p_2\right]\\
&=-2\,\d H_1\wedge\Omega_1,
\end{align*}
so that the decomposition of $\widehat{\phi}$ as in condition $(a)$ of Theorem \ref{thm:involution} is verified. 
Condition $(b)$ 
is checked for the general case $C_n^{(1)}$ in Appendix A. This shows that 
the phase space of the 2-particle $C_2^{(1)}$-Toda lattice has a formulation in terms of involutive PqN manifolds.
\begin{rem} Actually, in Appendix A we show that the stronger condition $(b')$, appearing in Theorem \ref{thm:involution2}, is satisfied. Hence this theorem too can applied to the $C_n^{(1)}$ case. 
Since $\left[\Omega_1,\Omega_1\right]_\pi=0$, it turns out that 
$\phi_\pm=\pm \d_N\Omega_1$, as in the $A_n^{(1)}$ case.
\end{rem}

\section{Flaschka coordinates and reduction in the $A_n^{(1)}$ case}
\label{sec:Flaschka}

We consider the PqN manifold $(\RR^{2n},\pi,N_-,\phi_-)$ introduced in the previous section. 
In this section we will show that the following claim holds true:\\
\indent\emph{The above mentioned PqN structure, describing the closed Toda lattice in 
physical variables, reduces to the bi-Hamiltonian structure of the closed Toda lattice in Flaschka coordinates, 
see \eqref{Flaschka-map} below. 
Under this reduction, the generalized Lenard-Magri chain \eqref{gen-LM-chain} goes to the standard one, see \eqref{gen-LM-chain-3} below.}

To avoid unnecessary complications, here we shall limit to display the concrete expressions for the $4$-particle  case. The generic $n$-particle case, with full proofs of the assertions below, is contained in Appendix B.

According to the conventions and settings of the previous section, the canonical Poisson structure on $\RR^{8}$ is represented as
\begin{equation*} 
\pi
=\left( \begin {array}{cccc|cccc} 
0&0&0&0&\,1\, &0&0&0\\ 
0&0&0&0&0&\,1\,&0&0\\ 
0&0&0&0&0&0&\,1\,&0\\ 
0&0&0&0&0&0&0&\,1\,\\ 
\hline
-1&0&0&0&0&0&0&0\\ 
0&-1&0&0&0&0&0&0\\ 
0&0&-1&0&0&0&0&0\\ 
0&0&0&-1&0&0&0&0\\ 
\end{array}\right),
\end{equation*}
while

\begin{equation*}
N=
\left(\begin {array}{cccc|cccc} 
p_{{1}}&0&0&0&0&1&1&1\\ 
0&p_{{2}}&0&0&-1&0&1&1\\ 
0&0&p_{{3}}&0&-1&-1&0&1\\ 
0&0&0&p_{{4}}&-1&-1&-1&0\\ 
\hline
0&-{{\rm e}^{q_{{1}}-q_{{2}}}}&0& 0 &p_{{1}}&0&0&0\\ 
{{\rm e}^{q_{{1}}-q_{{2}}}}&0&-{{\rm e}^{q_{{2}}-q_{{3}}}}&0&0&p_{{2}}&0&0\\ 
0&{{\rm e}^{q_{{2}}-q_{{3}}}}&0&-{{\rm e}^{q_{{3}}-q_{{4}}}}&0&0&p_{{3}}&0\\ 
0&0&{{\rm e}^{q_{{3}}-q_{{4}}}}&0&0&0&0&p_{{4}}
\end {array}\right)
\end{equation*}
and
\begin{equation*}
N_-=
\left(\begin {array}{cccc|cccc} 
p_{{1}}&0&0&0&0&1&1&1\\ 
0&p_{{2}}&0&0&-1&0&1&1\\ 
0&0&p_{{3}}&0&-1&-1&0&1\\ 
0&0&0&p_{{4}}&-1&-1&-1&0\\ 
\hline
0&-{{\rm e}^{q_{{1}}-q_{{2}}}}&0& {{\rm e}^{q_{{4}}-q_{{1}}}}&p_{{1}}&0&0&0\\ 
{{\rm e}^{q_{{1}}-q_{{2}}}}&0&-{{\rm e}^{q_{{2}}-q_{{3}}}}&0&0&p_{{2}}&0&0\\ 
0&{{\rm e}^{q_{{2}}-q_{{3}}}}&0&-{{\rm e}^{q_{{3}}-q_{{4}}}}&0&0&p_{{3}}&0\\ 
-{{\rm e}^{q_{{4}}-q_{{1}}}}&0&{{\rm e}^{q_{{3}}-q_{{4}}}}&0&0&0&0&p_{{4}}
\end {array}\right).
\end{equation*}
The bivector field 
defined by $\pi_{N_-}^{\phantom{\sharp}}
=N_-\pi
$ is thus
\begin{equation}
\label{piN-Toda}
\pi_{N_-}^{\phantom{\sharp}}
=\left( \begin {array}{cccc|cccc} 
0&-1&-1&-1&p_{{1}}&0&0&0\\ 
1&0&-1&-1&0&p_{{2}}&0&0\\ 
1&1&0&-1&0&0&p_{{3}}&0\\ 
1&1&1&0&0&0&0&p_{{4}}\\ 
\hline
-p_{{1}}&0&0&0&0&-{{\rm e}^{q_{{1}}-q_{{2}}}}&0&{{\rm e}^{q_{{4}}-q_{{1}}}}\\ 
0&-p_{{2}}&0&0&{{\rm e}^{q_{{1}}-q_{{2}}}}&0&-{{\rm e}^{q_{{2}}-q_{{3}}}}&0\\ 
0&0&-p_{{3}}&0&0&{{\rm e}^{q_{{2}}-q_{{3}}}}&0&-{{\rm e}^{q_{{3}}-q_{{4}}}}\\ 
0&0&0&-p_{{4}}&-{{\rm e}^{q_{{4}}-q_{{1}}}}&0&{{\rm e}^{q_{{3}}-q_{{4}}}}&0
\end {array} \right).
\end{equation}

Let us consider the Flaschka map
$F:\RR^{2n}\to \RR^{2n}$ given by $$F(q_1,\dots,q_n,p_1,\dots,p_n)=(a_1,\dots,a_n,b_1,\dots,b_n),$$ where
\begin{equation}
\label{Flaschka-map}
a_i=-{\rm e}^{q_i-q_{i+1}},\qquad b_i=p_i,\qquad\mbox{with $q_{n+1}=q_1$.}
\end{equation}
It is well know (see, e.g., \cite{Nunes-Marle} for the open case) that the canonical Poisson tensor $\pi$ is $F$-related 
to the Poisson bivector 
$$
P_0=\left(
\begin{array}{cccc|cccc}
 0 & 0 & 0 & 0 & a_1 & -a_1 & 0 & 0 \\
 0 & 0 & 0 & 0 & 0 & a_2 & -a_2 & 0 \\
 0 & 0 & 0 & 0 & 0 & 0 & a_3 & -a_3 \\
 0 & 0 & 0 & 0 & -a_4 & 0 & 0 & a_4 \\
 \hline
 -a_1 & 0 & 0 & a_4 & 0 & 0 & 0 & 0 \\
 a_1 & -a_2 & 0 & 0 & 0 & 0 & 0 & 0 \\
 0 & a_2 & -a_3 & 0 & 0 & 0 & 0 & 0 \\
 0 & 0 & a_3 & -a_4 & 0 & 0 & 0 & 0 \\
\end{array}
\right)
$$
in the sense that $P_0
=F_*\pi
F^*$. In Appendix B we prove for the general $n$-particle case
\begin{prop}
\label{prop:F-rel}
The bivector $\pi_{N_-}^{\phantom{\sharp}}$ is $F$-related to the Poisson bivector 
\begin{equation}
\label{P1-Toda}
P_1=\left(
\begin{array}{cccc|cccc}
 0 & -a_1 a_2 & 0 & a_1 a_4 & a_1 b_1
   & -a_1 b_2 & 0 & 0 \\
 a_1 a_2 & 0 & -a_2 a_3 & 0 & 0 & a_2
   b_2 & -a_2 b_3 & 0 \\
 0 & a_2 a_3 & 0 & -a_3 a_4 & 0 & 0 &
   a_3 b_3 & -a_3 b_4 \\
 -a_1 a_4 & 0 & a_3 a_4 & 0 & -a_4 b_1
   & 0 & 0 & a_4 b_4 \\
   \hline
 -a_1 b_1 & 0 & 0 & a_4 b_1 & 0 & a_1
   & 0 & -a_4 \\
 a_1 b_2 & -a_2 b_2 & 0 & 0 & -a_1 & 0
   & a_2 & 0 \\
 0 & a_2 b_3 & -a_3 b_3 & 0 & 0 & -a_2
   & 0 & a_3 \\
 0 & 0 & a_3 b_4 & -a_4 b_4 & a_4 & 0
   & -a_3 & 0 \\
\end{array}
\right).
\end{equation}
\end{prop}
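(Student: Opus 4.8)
The plan is to verify the defining relation for $F$-relatedness of bivectors directly and pointwise, exactly as was done for the pair $(\pi,P_0)$ where $P_0=F_*\pi F^*$. Writing the relation as $P_1^\sharp\big|_{F(x)}=dF_x\circ\pi_{N_-}^\sharp\big|_x\circ(dF_x)^{T}$ (with $F^*=(dF)^{T}$), I must show that the matrix product $dF\,\pi_{N_-}^\sharp\,(dF)^{T}$, evaluated on the image of $F$, coincides with the matrix \eqref{P1-Toda} after the substitution $a_i=-e^{q_i-q_{i+1}}$, $b_i=p_i$. First I would compute the Jacobian of the Flaschka map. Since $a_i=-e^{q_i-q_{i+1}}$ depends only on the $q$'s while $b_i=p_i$, the differential is block diagonal, $dF=\begin{pmatrix}A&0\\ 0&I\end{pmatrix}$, and the single identity that drives the whole computation is
\begin{equation*}
\frac{\partial a_i}{\partial q_j}=a_i(\delta_{ij}-\delta_{i+1,j}),\qquad\text{indices read cyclically}\ (q_{n+1}=q_1),
\end{equation*}
so that $A=\mathrm{diag}(a_1,\dots,a_n)\,B$ with $B_{ij}=\delta_{ij}-\delta_{i+1,j}$. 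This is the step that converts the exponentials appearing in $\pi_{N_-}$ into the $a$-variables of $P_1$.

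Next I would exploit the block structure of $\pi_{N_-}^\sharp=N_-\pi^\sharp$ displayed in \eqref{piN-Toda}. Writing it as $\begin{pmatrix}C&D\\ -D&E\end{pmatrix}$, where $C$ is the constant antisymmetric matrix $C_{ij}=\sgn(i-j)$, $D=\mathrm{diag}(b_1,\dots,b_n)$, and $E$ is the cyclic antisymmetric matrix whose nonzero entries are $\pm e^{q_i-q_{i+1}}=\mp a_i$, the transformation collapses to
\begin{equation*}
dF\,\pi_{N_-}^\sharp\,(dF)^{T}=\begin{pmatrix}A\,C\,A^{T}&A\,D\\ -D\,A^{T}&E\end{pmatrix}.
\end{equation*}
The bottom–right block is untouched and already equals the bottom–right block of $P_1$, since $E$ is expressed in the $a_i$. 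For the off-diagonal blocks, substituting $A=\mathrm{diag}(a)\,B$ gives $(AD)_{ij}=a_i b_j(\delta_{ij}-\delta_{i+1,j})$ and analogously for $-DA^{T}$; a one-line comparison with \eqref{P1-Toda} confirms the top–right and bottom–left blocks.

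The crux is the top–left block $A\,C\,A^{T}=\mathrm{diag}(a)\,(B\,C\,B^{T})\,\mathrm{diag}(a)$, which reduces the problem to the purely combinatorial matrix $BCB^{T}$ with entries $(BCB^{T})_{ij}=C_{ij}-C_{i,j+1}-C_{i+1,j}+C_{i+1,j+1}$. Using $C_{ij}=\sgn(i-j)$, I would show that this telescopes to $0$ whenever $|i-j|\ge 2$ in the interior, equals $-1$ on the superdiagonal $j=i+1$ (hence $(ACA^{T})_{i,i+1}=-a_i a_{i+1}$), and, by antisymmetry, $+1$ on the subdiagonal; conjugating by $\mathrm{diag}(a)$ then reproduces exactly the interior of the top–left block of $P_1$. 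The hard part will be the cyclic corner: because $C$ encodes the \emph{linear} order on $\{1,\dots,n\}$ while $B$ encodes the \emph{cyclic} difference (its $n$-th row being $e_n-e_1$), the wraparound index $n+1\equiv 1$ must be tracked carefully. A short direct evaluation gives $(BCB^{T})_{1,n}=C_{1,n}-C_{1,1}-C_{2,n}+C_{2,1}=+1$, producing the single corner entry $(ACA^{T})_{1,n}=a_1 a_n$ (and $(ACA^{T})_{n,1}=-a_1 a_n$); the essential extra check is that the remaining wraparound rows generate \emph{no} spurious off-diagonal terms, in particular distinguishing the genuine corner from the near-corner superdiagonal entry $(n-1,n)$, which must still come out as $-a_{n-1}a_n$. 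This is the only place where the periodicity of the lattice genuinely enters.

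Assembling the four blocks then matches \eqref{P1-Toda} on the image of $F$, which is the assertion. Finally, the Poisson property of $P_1$ (if needed beyond the $F$-relatedness claim) follows from that of $\pi_{N_-}$, since $F$ is a submersion onto its image and $F$-relatedness transports the vanishing of the Schouten bracket.
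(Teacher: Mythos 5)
Your proposal follows essentially the same route as the paper's Appendix B: exploit the block-diagonal form of $F_*$, observe that the off-diagonal and lower-right blocks of $F_*\,\pi_{N_-}^{\phantom{\sharp}}\,F^*$ match those of $P_1$ immediately, and reduce everything to the single nontrivial identity for the upper-left block, which both you and the paper verify through the combinatorial/telescoping identity for the sign matrix conjugated by the (cyclic) difference matrix, with the wraparound corner as the only delicate entry. The paper handles the periodicity uniformly via a periodic Kronecker delta rather than by treating the corner as a separate case, but this is a cosmetic difference, not a different argument.
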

\begin{rem}
As we shall see in Appendix B, the key property for this to hold is that the tangent $F_*$  to the Flaschka map is given by
\begin{equation}\label{Fstar4}
F_*=\left(
\begin{array}{cccc|cccc}
 a_1 & -a_1 & 0 & 0 & 0 & 0 & 0 & 0 \\
 0 & a_2 & -a_2 & 0 & 0 & 0 & 0 & 0 \\
 0 & 0 & a_3 & -a_3 & 0 & 0 & 0 & 0 \\
 -a_4 & 0 & 0 & a_4 & 0 & 0 & 0 & 0 \\
 \hline
 0 & 0 & 0 & 0 & 1 & 0 & 0 & 0 \\
 0 & 0 & 0 & 0 & 0 & 1 & 0 & 0 \\
 0 & 0 & 0 & 0 & 0 & 0 & 1 & 0 \\
 0 & 0 & 0 & 0 & 0 & 0 & 0 & 1 \\
\end{array}\right).
\end{equation}
\end{rem}
For the bi-Hamiltonian pair $(P_0,P_1)$ of the closed Toda lattice in Flaschka coordinates and its interpretation in the framework of linear and quadratic Poisson structures associated to an $r$-matrix, we refer to \cite{MorosiPizzocchero96} and references therein (see also \cite{FMP2001} for the relations with the separation of variables). Notice that $P_1$ is Poisson even though 
$\pi_{N_-}^{\phantom{\sharp}}$ is not. This can be explained as follows.

First of all, we notice that the image of the map $F$ is 
the submanifold 
$$
\widetilde\M=\{(a,b)\in \RR^{2n}\mid C(a,b)=(-1)^n, a_i<0\},
$$
where $C(a,b)=\prod_{i=1}^n a_i$, and that $F: \RR^{2n}\to\widetilde\M$ is the projection 
along the integral curves of the vector field $X_1=X^+_1=X^-_1=\sum_{i=1}^{n}\partial_{q_i}$. 
Since the pair $(\pi,\pi_{N_-}^{\phantom{\sharp}})$ is $F$-related to the pair $(P_0,P_1)$, it can be projected on the restriction of $(P_0,P_1)$ to $\widetilde\M$. 
Such restriction exists since $C$ is a Casimir function both for $P_0$ and $P_1$.
Now, let us first recall from \cite{SX} that  
$$
[\pi,\pi_{N_-}^{\phantom{\sharp}}]=0,\qquad [\pi_{N_-}^{\phantom{\sharp}},\pi_{N_-}^{\phantom{\sharp}}]=2\pi^\sharp\phi_-,
$$
where $[\cdot,\cdot]$ denotes the Schouten bracket between multi-vectors.
The first identity implies that $[\widetilde{P}_0,\widetilde{P}_1]=0$, where $\widetilde{P}_0$ and $\widetilde{P}_1$ are the restrictions of $P_0$ and $P_1$ on $\widetilde\M$. From the second identity it follows that 
$$
[\pi_{N_-}^{\phantom{\sharp}},\pi_{N_-}^{\phantom{\sharp}}]=2\pi^\sharp\phi_-=4\pi^\sharp(\d H_1^-\wedge\Omega)
=4\pi^\sharp(\d H_1^-)\wedge\pi^\sharp(\Omega)=4X_1\wedge\pi^\sharp(\Omega).
$$
Since the projection of $X_1$ on $\widetilde\M$ vanishes, we have that $[\widetilde{P}_1,\widetilde{P}_1]=0$.
Hence we can conclude that the PqN structure $(\RR^{2n},\pi,N_-,\phi_-)$ --- 
more precisely, the triple $(\RR^{2n},\pi,\pi_{N_-}^{\phantom{\sharp}})$ --- projects on the restriction 
$(\widetilde\M,\widetilde{P}_0,\widetilde{P}_1)$ of the bi-Hamiltonian manifold $(\RR^{2n},P_0,P_1)$.

As far as the integrals of motion $H^k_+$ of the closed Toda lattice are concerned, we first remark that they pass to the 
quotient $\widetilde\M$, in the sense that there exist functions $\widetilde{H}^k_+:\widetilde\M\to\RR$ such that 
${H}^k_+=\widetilde{H}^k_+\circ F$. This can be seen as a consequence of $X_1({H}^k_+)=0$ or, more directly, from the fact that the functions 
${H}^k_+$ depend only on the differences $q_i-q_{j}$. The generalized Lenard-Magri relations \eqref{gen-LM-chain} entail that 
\begin{equation}
\label{gen-LM-chain-2}
\pi_{N_-}^\sharp \d H_k^+ = \pi^\sharp \,\d H_{k+1}^+ + f_k X_1,\qquad k\ge 1,
\end{equation}
so that on $\widetilde\M$ 
we have the usual relations
\begin{equation}
\label{gen-LM-chain-3}
 \widetilde{P}_1^\sharp\,\d \widetilde{H}_k^+ = \widetilde{P}_0^\sharp\,\d \widetilde{H}_{k+1}^+,\qquad k\ge 1.
\end{equation}
One can easily check that $\widetilde{P}_0^\sharp\,\d \widetilde{H}_{1}^+=0$. Actually, 
${P}_0^\sharp\,\d {H}_{1}^+=0$, and relations (\ref{gen-LM-chain-3}) hold without tildes too.

\section*{Appendix A: Toda lattices related to $C_n^{(1)}$ and $A_{2n}^{(2)}$}
\label{app:Cn}
The Hamiltonians of the open orthogonal Toda systems 
are 
\begin{equation}\label{eq: Hamiltonian B & C}
\frac{1}{2}\sum_{i=1}^np_i^2+\sum_{i=1}^{n-1}e^{q_{i-1}-q_i}+e^{mq_n},   
\end{equation}
where $m=1$ (respectively, $m=2$) corresponds to the Lie algebras $B_n$ (respectively, $C_n$). 
In \cite{daCosta-Damianou} Poisson bivectors $
{\pi'}$ were introduced for the open orthogonal Toda systems, compatible with the canonical Poisson bivector 
$\pi$ and providing a bi-Hamiltonian formulation for the above mentioned Hamiltonian systems. 
The non-zero brackets 
of $
{\pi'}$ are, for $n\ge 3$, 
\begin{equation}\label{eq: B_n & C_n brackets}
\begin{aligned}
\lbrace q_i,q_{i-1}\rbrace'&=\lbrace q_i,q_{i-2}\rbrace'=\cdots=\lbrace q_i,q_1\rbrace'=2p_i,\quad i=2,\ldots,n,\\
\lbrace p_i,q_{i-2}\rbrace'&=\lbrace p_i,q_{i-3}\rbrace'=\cdots=\lbrace p_i,q_1\rbrace'=2\left(e^{q_{i-1}-q_i}-e^{q_i-q_{i+1}}\right),\quad i=3,\ldots,n-1,\\
\lbrace p_n,q_{n-2}\rbrace'&=\lbrace p_n,q_{n-3}\rbrace'=\cdots=\lbrace p_n,q_1\rbrace'=2e^{q_{n-1}-q_n}-2me^{mq_n},\\
\lbrace q_i,p_i\rbrace'&=p_i^2+2e^{q_i-q_{i+1}},\quad i=1,\ldots,n-1,\\
\lbrace q_n,p_n\rbrace'&=p_n^2+2e^{mq_n},\\
\lbrace q_{i+1},p_i\rbrace'&=e^{q_i-q_{i+1}},\\
\lbrace q_i,p_{i+1}\rbrace'&=2e^{q_{i+1}-q_{i+2}}-e^{q_i-q_{i+1}},\quad i=1,\ldots,n-2,\\
\lbrace q_{n-1},p_n\rbrace'&=2me^{mq_n}-e^{q_{n-1}-q_n},\\
\lbrace p_i,p_{i+1}\rbrace'&=-e^{q_i-q_{i+1}}(p_i+p_{i+1}).
\end{aligned}
\end{equation}
 As in the case of the 2-particle $C_2$-Toda system, see Subsection \ref{ss:C2},
 the phase spaces of the open orthogonal Toda systems are endowed with a PN strucure $(\pi,N)$, where 
\begin{equation}
N=
{\pi'}
\,
\pi
^{-1}=
\left( \begin {array}{ccc|ccc}
        \lbrace p_1,q_1\rbrace' &  \cdots &\lbrace p_n,q_1\rbrace' &\lbrace q_1,q_1\rbrace' &\cdots & \lbrace q_1,q_n\rbrace'\\
        \vdots & \cdots  & \vdots & \vdots &\cdots & \vdots\\
        \lbrace p_1,q_n\rbrace' &  \cdots &\lbrace p_n,q_n\rbrace' &\lbrace q_n,q_1\rbrace' &\cdots & \lbrace q_n,q_n\rbrace'\\
       \hline
        \lbrace p_1,p_1\rbrace'  & \cdots &\lbrace p_n,p_1\rbrace' &\lbrace p_1,q_1\rbrace' &\cdots & \lbrace p_1,q_n\rbrace'\\
        \vdots & \cdots & \vdots & \vdots &\cdots & \vdots\\
        \lbrace p_1,p_n\rbrace' & \cdots &\lbrace p_n,p_n\rbrace' &\lbrace p_n,q_1\rbrace' &\cdots & \lbrace p_n,q_n\rbrace'
        \label{eq: matrix tensor Cn}
\end{array}\right).
\end{equation}
In the following we generalize the result obtained in Section \ref{sec:Toda} to revisit the integrability of the closed 
Toda-type systems whose Hamiltonians are
\begin{equation}\label{eq: periodic Hamiltonians}
        H=\frac{1}{2}\sum_{i=1}^np_i^2+\sum_{i=1}^{n-1}e^{q_{i-1}-q_i}+e^{mq_n}+e^{-2q_1},
\end{equation}
where $m=1$ and $m=2$ correspond, respectively, to the affine Lie algebras $A_{2n}^{(2)}$ and $C_{n}^{(1)}$.
For the sake of simplicity, we initially deal with the case $m=2$ only.

We consider the same closed 2-form $\Omega_1=\d(e^{-2q_1}\d p_1)$, already used in the $C_{2}^{(1)}$ case, to 
obtain the PqN deformed structure $(\pi,\widehat{N},\widehat{\phi})$. It is easy to check that 
\begin{equation}
H_1=\frac{1}{2}\Tr(\widehat{N})=-2H
\end{equation}
and that the expression \eqref{eq: phi_expression} for $\widehat{\phi}$ still holds. 
Moreover, a straightforward computation shows that the differentials of the 1-forms
\begin{align*}
    N^{*}\left(\d q_1\right)&=\sum_{i=1}^{n}\lbrace p_i,q_1\rbrace' \d q_i+\sum_{i=1}^n\lbrace q_1,q_i\rbrace' \d p_i,\\
    N^{*}\left(\d p_1\right)&=\sum_{i=1}^{n}\lbrace p_i,p_1\rbrace' \d q_i+\sum_{i=1}^n\lbrace p_1,q_i\rbrace' \d p_i
\end{align*}
are given by the last two equations of \eqref{eq: differentials}. Therefore,
\begin{align*}
    \widehat{\phi}
    &=\Big(-4e^{-2q_1}\left(e^{q_1-q_2}-2e^{q_2-q_3}\right)\d q_1\wedge \d q_2-8e^{-2q_1}\sum_{i=3}^{n-1}\left(e^{q_{i-1}-q_i}-e^{q_i-q_{i+1}}\right)\d q_1\wedge \d q_i\\
    &\;\;\;\;-8e^{-2q_1}\left(e^{q_{n-1}-q_n}-2e^{2q_n}\right)\d q_1\wedge \d q_n+8e^{-2q_1}\sum_{i=2}^np_i\d q_1\wedge \d p_i+4p_1e^{-2q_1}\d p_1\wedge \d q_1\\
    &\;\;\;\;-2e^{-2q_1}e^{q_1-q_2}\d q_1\wedge \d q_2\Big)\wedge \d p_1
    -2e^{-2q_1}e^{q_1-q_2}\d q_1\wedge \d q_2\wedge \d p_1\\
    &=-8e^{-2q_1}\sum_{i=2}^{n-1}\left(e^{q_{i-1}-q_i}-e^{q_i-q_{i+1}}\right)\d q_1\wedge \d q_i\wedge \d p_1-8e^{-2q_1}\left(e^{q_{n-1}-q_n}-2e^{2q_n}\right)\d q_1\wedge \d q_n\wedge \d p_1\\
    &\;\;\;\;-8e^{-2q_1}\sum_{i=2}^np_i\d q_1\wedge \d p_1\wedge \d p_i\\
    &=-2\,\d H_1\wedge\Omega_1,
\end{align*}
so that condition $(a)$ of Theorem \ref{thm:involution} is satisfied. Condition $(b)$ of Theorem \ref{thm:involution} is verified mimicking the proof of item (iii) in Theorem 7 of \cite{FMOP2020}, corresponding to the classical periodic Toda lattice. We present here only the main points. From equation \eqref{ff2} it follows that 
$X_{k+1}=\widehat{N}X_k-\pi^{\sharp}\phi_{k-1}$ and so we have 
\begin{align*}
Y_k&=\widehat{N}^{k-1}X_1-X_k=\sum_{l=1}^{k-1}\left(\widehat{N}^{k-l}X_l-\widehat{N}^{k-l-1}X_{l+1}\right)\\
&=\sum_{l=1}^{k-1}\widehat{N}^{k-l-1}\left(\widehat{N}X_l-X_{l+1}\right)=\pi^{\sharp}\sum_{l=0}^{k-2}\left(\widehat{N}^*\right)^{k-l-2}\phi_l.
\end{align*}
Notice that $i_{Y_k}\Omega_1=0$ is equivalent to $\left<\d q_1,Y_k\right>=\left<\d p_1,Y_k\right>=0$ for all $k\geq 1$, and that
\begin{align}
\left<\d q_1,Y_k\right>&=\sum_{l=0}^{k-2}\left<\phi_l,\left(\widehat{N}\right)^{k-l-2}\partial_{p_1}\right>\label{eq: phi_l dp_1},
\\ 
\left<\d p_1,Y_k\right>&=-\sum_{l=0}^{k-2}\left<\phi_l,\left(\widehat{N}\right)^{k-l-2}\partial_{q_1}\right>\label{eq: phi_l dq_1}. 
\end{align}
Considering equation \eqref{eqtr}, we see that each summand of \eqref{eq: phi_l dp_1} has the form
\begin{align*}
\left<\phi_l,\widehat{N}^{k-l-2}\partial_{p_1}\right>
&=-\left<\d H_1,\widehat{N}^{k-l-2}\partial_{p_1}\right>\Tr\left(\widehat{N}^l\pi^{\sharp}\Omega_1^{\flat}\right)-2\Omega_1\left(\widehat{N}^{k-l-2}\partial_{p_1},\widehat{N}^{l}X_1\right),
\end{align*}
where 
the three terms appearing in the right-hand side of the above equation can be written as
\begin{align*}
\left<\d H_1,\widehat{N}^{k-l-2}\partial_{p_1}\right>
&=-\left<\left(\widehat{N}^*\right)^{k-l-2}\d H_1,\pi^{\sharp}\d q_1\right>=\left<\d q_1,\widehat{N}^{k-l-2}X_1\right>,\\
\Tr\left(\widehat{N}^l\pi^{\sharp}\Omega_1^{\flat}\right)
&=\left<\d q_1,\widehat{N}^{l}\pi^{\sharp}\Omega_1^{\flat}\partial _{q_1}\right>+\left<\d p_1,\widehat{N}^{l}\pi^{\sharp}\Omega_1^{\flat}\partial_{p_1}\right>=-4e^{-2q_1}\left<\d p_1,\widehat{N}^l\partial_{p_1}\right>,\\
2\Omega_1\left(\widehat{N}^{k-l-2}\partial_{p_1},\widehat{N}^lX_1\right)  &=-4e^{-2q_1}\left(\left<\d q_1,\widehat{N}^{k-l-2}\partial_{p_1}\right>\left<\d p_1,\widehat{N}^{l}X_1\right>-\left<\d q_1,\widehat{N}^lX_1\right>\left<\d p_1,\widehat{N}^{k-l-2}\partial_{p_1}\right>\right).  
\end{align*}
Thus we have
\begin{equation}
\begin{aligned}
\left<\phi_l,\widehat{N}^{k-l-2}\partial_{p_1}\right>&=4e^{-2q_1}\left(\left<\d q_1,\widehat{N}^{k-l-2}X_1\right>\left<\d p_1,\widehat{N}^{l}\partial_{p_1}\right>+\left<\d q_1,\widehat{N}^{k-l-2}\partial_{p_1}\right>\left<\d p_1,\widehat{N}^{l}X_1\right>\right.\\
&\;\;\;\;\left.-\left<\d q_1,\widehat{N}^lX_1\right>\left<\d p_1,\widehat{N}^{k-l-2}\partial_{p_1}\right>\right).\label{eq: expresssion phi_l dp_1}    
\end{aligned}
\end{equation}
Hence from equations \eqref{eq: phi_l dp_1} and \eqref{eq: expresssion phi_l dp_1} we obtain that
\begin{align*}
\left<\d q_1,Y_k\right>
&=\sum_{l=0}^{k-2}\left<\phi_l,\left(\widehat{N}\right)^{k-l-2}\partial_{p_1}\right>=4e^{-2q_1}\sum_{l=0}^{k-2}\left<\d q_1,\widehat{N}^{k-l-2}\partial_{p_1}\right>\left<\d p_1,\widehat{N}^lX_1\right>=0,
\end{align*}
since each term $\left<\d q_1,\widehat{N}^{k-l-2}\partial_{p_1}\right>$ vanishes. Indeed, it is the entry $(1,n+1)$ 
of $\widehat{N}^r$ and it is zero because the $n\times n$ upper right block of $\widehat{N}^r$ is skew-symmetric, as a consequence of $\widehat{N}^{r}\pi^\sharp=\pi^\sharp(\widehat{N}^*)^{r}$.
Analogously, one can see that \eqref{eq: phi_l dq_1} becomes
\[
\left<\d p_1,Y_k\right>=-4e^{-2q_1}\sum_{l=0}^{k-2}\left<\d q_1,\widehat{N}^{l}X_1\right>\left<\d p_1,\widehat{N}^{k-l-2}\partial_{q_1}\right>=0,
\]
since $\left<\d p_1,\widehat{N}^{k-l-2}\partial_{q_1}\right>=0$ because of the skew-symmetry of the $n\times n$ lower left block  of $\widehat{N}^r$. Therefore, the phase space of the $n$-particle periodic Toda lattice of type $C_{n}^{(1)}$ has a geometrical formulation in terms of involutive PqN manifold.
\begin{rem}
With a few modifications, considering $m=1$ in \eqref{eq: B_n & C_n brackets} 
and \eqref{eq: periodic Hamiltonians}, we also have an involutive PqN manifold formulation 
on the phase space of the 
Toda lattice for the twisted affine loop algebra $A_{2n}^{(2)}$.    
\end{rem}

\section*{Appendix B: Proof of Proposition \ref{prop:F-rel}}
\label{app:4particle-proof}

We herewith recover the relation $P_0=F_*\pi F^*$ and prove the equality $P_1=F_*\,\pi_{N_-}^{\phantom{\sharp}}\, F^*$ for generic $n$.

To start with, we notice that the block structures of the involved matrices is as follows:
\begin{equation}\label{blockF}
F_*=\left( \begin {array}{c|c} 
A & 0\\ 
\hline
0 & I
\end {array} \right)\, ,\qquad F^*=\left( \begin {array}{c|c} 
A_T & 0\\ 
\hline
0 & I
\end {array} \right),
\end{equation}
while
\begin{equation} \label{PP01} 
\pi=\left( \begin {array}{c|c} 
0& I\\ 
\hline
-I & 0
\end {array} \right)  
, \quad 
\pi_{N_-}^{\phantom{\sharp}}=
\left( \begin {array}{c|c} 
\boldsymbol{\epsilon} & D\\ 
\hline
-D & E
\end {array} \right)
,\quad
P_0=\left( \begin {array}{c|c} 
0 & A\\ 
\hline
-A_T & 0
\end {array} \right),\quad
P_1=\left( \begin {array}{c|c} 
\widetilde{A}& B\\ 
\hline
-B_T & C
\end {array} \right).
\end{equation}
Defining the periodic Kronecker $\delta$ symbol and the $\epsilon$--symbol as
\begin{equation}
\label{perdelta}
\den{k}{j}=\delta_{k\,{\small mod}\,n,\,j\,{\small mod}\,n},\,\forall k,j\geq 0
,
\qquad \epsilon(\ell)=\left\{
\begin{array}{l}
1 \quad\ \ \text{if } \ell>0\\
0 \quad\ \ \text{if } \ell=0\\
-1\quad \text{if } \ell<0
\end{array}\right.\, ,
\end{equation}
the matrix elements of the submatrices entering Eqs.\ (\ref{blockF}) and (\ref{PP01}) are expressed as:
\begin{equation}
\label{PP01matel}
\begin{array}{ll}
A_{k,j} =a_k\delta_{k,j}-a_k\den{k}{j-1} , &\boldsymbol{\epsilon}_{k,j}=\epsilon(j-k),\qquad \qquad  {D}_{k,j}=p_k\delta_{k,j}\equiv b_k \delta_{k,j},\\
E_{k,j}=a_k\den{k}{j-1}-a_j\den{k-1}{j},  & B_{k,j}=a_k\, b_k\delta_{k,j}-a_k\, b_j \delta^{(n)}_{k,j-1},\\
 \widetilde{A}_{k,j}=-a_ka_j\den{k}{j-1}+a_ka_j\den{k}{j+1},  & C_{k,j}=a_k\den{k}{j-1}-a_j\den{k}{j+1}.
\end{array}
\end{equation}
At first we remark that 
\begin{equation}\label{FP0ok}
F_*\pi F^*=\left( \begin {array}{c|c} 
0 & A\\ 
\hline
-A_T & 0
\end {array} \right), 
\end{equation}
which is just the relation between $\pi$ and $P_0$.

For the second structure, we observe that 
\begin{equation}\label{checkP1} 
F_*\,\pi_{N_-}^{\phantom{\sharp}}\, F^*=\left( \begin {array}{c|c} 
A\boldsymbol{\epsilon} A_T & AD\\ 
\hline
-DA_T & E
\end {array} \right),
\end{equation}
so by looking at (\ref{PP01matel}) the only non-trivial equality to be proven is
\begin{equation}
\label{nteqapp}
A\boldsymbol{\epsilon}A_T=\widetilde{A}.
\end{equation}
This holds true since, as straightforward computations show, one has that
\begin{equation}
\label{conto1}
\left(A\boldsymbol{\epsilon}A_T\right)_{\ell,k}=a_\ell\, a_k\, \left(2\epsilon(k-\ell)-\epsilon(k-\ell-1)-\epsilon(k-\ell+1)\right),
\end{equation}
and
\begin{equation}
\label{conto2}
\left(2\epsilon(k-\ell)-\epsilon(k-\ell-1)-\epsilon(k-\ell+1)\right)=\den{k}{\ell+1}-\den{k}{\ell-1}.
\end{equation}


\par\medskip\noindent
{\bf Acknowledgments.} 
We thank Murilo do Nascimento Luiz, Franco Magri, Giovanni Ortenzi, and Giorgio Tondo for useful discussions. MP thanks the ICMC-USP, {\em Instituto de Ci\^encias Matem\'aticas e de Computa\c c\~ao\/} of the University of S\~ao Paulo, and the Department of Mathematics and its Applications of the University of Milano-Bicocca for their hospitality, the Funda\c c\~ao de Amparo \`a Pesquisa do Estado de S\~ao Paulo - FAPESP, Brazil, for supporting his visit in 2023 to the ICMC-USP with the grant 2022/02454-8, and the University of Bergamo, for supporting his visit in 2024 to the ICMC-USP within the program {\em Outgoing Visiting Professors}.
This project has received funding from the European Union's Horizon 2020 research and innovation programme under the 
Marie Sk{\l}odowska-Curie grant no 778010 {\em IPaDEGAN} as well as by the Italian PRIN 2022 (2022TEB52W) - PE1 - project {\em The charm of integrability: from nonlinear waves to random matrices}. All authors gratefully acknowledge the auspices of the GNFM Section of INdAM under which part of this work was carried out. 
We are grateful to the anonymous referee, whose suggestions helped us to substantially improve the content 
of our manuscript.


\medskip\noindent{\bf Data availability.} Data sharing was not applicable to this article as no datasets were generated or analyzed during the current study.


\medskip\noindent{\bf Conflict of interest.} On behalf of all authors, the corresponding author states that there is no conflict of interest.

\thebibliography{99}

\bibitem{Antunes2008}
Antunes, P., {\it Poisson quasi-Nijenhuis structures with background}, Lett. Math. Phys. {\bf 86} (2008), 33--45.

\bibitem{Bogo96-180} Bogoyavlenskij, O.I., {\it Theory of Tensor Invariants of Integrable Hamiltonian Systems. I. Incompatible Poisson Structures}, 
Commun. Math. Phys. {\bf 180} (1996), 529--586.

\bibitem{Bogo96-182}  Bogoyavlenskij, O.I., {\it Necessary Conditions for Existence of Non-Degenerate Hamiltonian Structures}, 
Commun. Math. Phys. {\bf 182} (1996), 253--290.

\bibitem{BKM2022}
{Bolsinov, A.V., Konyaev, A.Yu., Matveev, V.S.}, {\it Nijenhuis geometry}, {Adv. Math.} {\bf 394} (2022), 52 pages.


\bibitem{BursztynDrummond2019}  Bursztyn, H., Drummond, T.,
{\it Lie theory of multiplicative tensors}, Math.\ Ann.\ {\bf 375} (2019), 1489--1554.

\bibitem{BursztynDrummondNetto2021} Bursztyn, H., Drummond, T., Netto, C.,
{\it Dirac structures and Nijenhuis operators}, Math.\ Z.\ {\bf 302} (2022), 875--915. 

\bibitem{C-NdC-2010} Cordeiro, F., Nunes da Costa, J.M.,
{\it Reduction and construction of Poisson quasi-Nijenhuis manifolds with background}, 
Int. J. Geom. Methods Mod. Phys. {\bf 7} (2010), 539--564.





\bibitem{DO} Das, A., Okubo, S., {\it A systematic study of the Toda lattice}, Ann. Physics {\bf 190} (1989), 215--232.

\bibitem{DMP2024} do Nascimento Luiz, M., Mencattini, I., Pedroni, M.,
{\it Quasi-Lie bialgebroids, Dirac structures, and deformations of Poisson quasi-Nijenhuis manifolds\/}, 
Bull.\ Braz.\ Math.\ Soc.\ (N.S.) {\bf 55} (2024), 18 pages.

\bibitem{FMP2001} Falqui, G., Magri, F., Pedroni, M.,
{\it Bihamiltonian geometry and separation of variables for Toda lattices\/}, J.\ Nonlinear
Math.\ Phys.\ {\bf 8} (2001), suppl., 118--127.

\bibitem{FMT2000} Falqui, G., Magri, F., Tondo, G.,
{\it Reduction of bi-Hamiltonian systems and the separation of variables: an example from the Boussinesq hierarchy\/}, 
Theoret.\ and Math.\ Phys.\ {\bf 122} (2000), 176--192.

\bibitem{FMOP2020} Falqui, G., Mencattini, I., Ortenzi, G., Pedroni, M.,
{\it Poisson Quasi-Nijenhuis Manifolds and the Toda System\/}, Math.\ Phys.\ Anal.\ Geom.\ {\bf 23} (2020), 17 pages.

\bibitem{FMP2023} Falqui, G., Mencattini, I., Pedroni, M.,
{\it Poisson quasi-Nijenhuis deformations of the canonical PN structure\/}, J.\ Geom.\ Phys.\  {\bf 186} (2023), 10 pages.


\bibitem{FiorenzaManetti2012} Fiorenza, D., Manetti, M.,
{\it Formality of Koszul brackets and deformations of holomorphic Poisson manifolds},
Homology Homotopy Appl. {\bf 14} (2012), 63--75. 



\bibitem{KM} Kosmann-Schwarzbach, Y., Magri, F., {\it Poisson-Nijenhuis structures}, Ann. Inst. Henri Poincar\'e {\bf 53} (1990), 35--81.



\bibitem{Magri2003} Magri, F., 
{\it Lenard chains for classical integrable systems}, Theoret.\ and Math.\ Phys.\ {\bf 137} (2003), 1716--1722.


\bibitem{MagriMorosiRagnisco85} Magri, F., Morosi, C., Ragnisco, O.,
{\it Reduction techniques for infinite-dimensional Hamiltonian systems: some ideas and applications},
Comm. Math. Phys. {\bf 99} (1985), 115--140. 

\bibitem{MorosiPizzocchero96} Morosi, C., Pizzocchero, L., {\it $R$-Matrix Theory, Formal Casimirs and the Periodic Toda Lattice}, 
J. Math. Phys. {\bf 37} (1996), 4484--4513. 


\bibitem{daCosta-Damianou}
Nunes da Costa, J.M., Damianou, P.A., 
{\it Toda systems and exponents of simple Lie groups}, 
Bull.\ Sci.\ Math.\ {\bf 125} (2001), 49--69.

\bibitem{Nunes-Marle} Nunes da Costa, J.M., Marle, C.-M., \emph{Reduction of bi-Hamiltonian manifolds and recursion operators}. In: Differential geometry and applications (Brno, 1995), Masaryk Univ., Brno, 1996, pp.\ 523--538. 



\bibitem{RSTS} Reyman, A.G., Semenov-Tian-Shansky, M.A., 
\emph{Group-Theoretical Methods in the Theory of Finite-Dimensional Integrable Systems}. 
In: Dynamical Systems VII, Encyclopaedia of Mathematical Sciences, vol.\ 16 (Arnol’d, V.I., Novikov, S.P., eds.), 
Springer, Berlin, 1994.

\bibitem{SX} Sti\'enon, M., Xu, P., {\it Poisson Quasi-Nijenhuis Manifolds}, Commun. Math. Phys. {\bf 270} (2007), 709--725.

\bibitem{Tondo1995} Tondo, G., {\it On the integrability of stationary and restricted flows of the KdV hierarchy}, 
J. Phys. A {\bf 28} (1995), 5097--5115.




\end{document}